\documentclass{scrartcl}

\usepackage{amsmath}
\usepackage{amsfonts}
\usepackage{amssymb}

\usepackage{amsthm}
\newtheorem{definition}{Definition}
\newtheorem{proposition}{Proposition}

\usepackage{tikz}
\usetikzlibrary{cd, shapes.geometric, decorations.markings}
\tikzset{->-/.style={decoration={markings, mark=at position .5 with {\arrow{>}}}, postaction={decorate}}}

\usepackage{subdepth}
\usepackage{microtype}

\usepackage{hyperref}

\newcommand\diset[2]{\binom{#1}{#2}}
\newcommand\id{\mathrm{id}}
\newcommand\op{\mathrm{op}}
\newcommand\Int{\mathbf{Int}} % int-construction
\newcommand\OG{\mathbf{OG}} % category of open games

\newcommand\Lens{\mathfrak L}
\newcommand\Set{\mathbf{Set}}
\newcommand\DCPO{\mathbf{DCPO}}

\newcommand\G{\mathcal G} % arbitrary game 1
\renewcommand\H{\mathcal H} % arbitrary game 2
\newcommand\D{\mathcal D} % decision
\newcommand\C{\mathcal C} % arbitrary category
 % equilibrium relation
\newcommand\R{\mathbb R} % real numbers

\title{The game semantics of game theory}
\author{Jules Hedges}
\date{}

\begin{document}

\maketitle

\begin{abstract}
	We use a reformulation of compositional game theory to reunite game theory with game semantics, by viewing an open game as the System and its choice of contexts as the Environment. Specifically, the system is jointly controlled by $n \geq 0$ noncooperative players, each independently optimising a real-valued payoff. The goal of the system is to play a Nash equilibrium, and the goal of the environment is to prevent it. The key to this is the realisation that lenses (from functional programming) form a dialectica category, which have an existing game-semantic interpretation.
	
	In the second half of this paper, we apply these ideas to build a compact closed category of `computable open games' by replacing the underlying dialectica category with a wave-style geometry of interaction category, specifically the Int-construction applied to the traced cartesian category of directed-complete partial orders.
\end{abstract}

\section{Introduction}

Although the mathematics of games shares a common ancestor in Zermelo's work on backward induction, it split early on into two subjects that are essentially disjoint: game theory and game semantics.
Game theory is the applied study of modelling real-world interacting agents, for example in economics or artificial intelligence.
Game semantics, by contrast, uses agents to model -- this time in the sense of \emph{semantics} rather than \emph{mathematical modelling} -- situations in which a system interacts with an environment, but neither would usually be thought of as agents in a philosophical sense.
On a technical level, game semantics not only restricts to the two-player zero-sum case, but moreover promotes one of the players to be \emph{the} Player, and demotes the other to mere Opponent.
This induces a deep logical duality that pervades game semantics, apparently destroying any hope of bridging the gap to game theory, which typically involves $n$ players treated symmetrically.

Compositional game theory \cite{hedges_towards_compositional_game_theory,hedges_etal_compositional_game_theory}, as its name suggests, is an attempt to introduce the principle of compositionality into game theory, motivated by practical concerns about modelling large (for example economic) systems.
It is loosely inspired by game semantics, as well as categorical quantum mechanics \cite{abramsky04,coecke_kissinger_picturing_quantum_processes} and much recent work in applied category theory (e.g. \cite{fong_algebra_open_interconnected_systems}).
On a technical level, game semantics involves (typically monoidal) categories in which games are the objects and strategies (with various conditions) are the morphisms, whereas open games form the \emph{morphisms} of a monoidal category.
This means that open games can be denoted by string diagrams, which is invaluable for working with them in practice.
As with other categories of open systems, ordinary ``closed'' games are recovered as \emph{scalars}, or endomorphisms of the monoidal unit (see \cite{abramsky05}), and depicted as string diagrams with trivial boundary.

Central to understanding open games is the concept of a \emph{context}, which is a compressed representation of a game-theoretic situation in which an open game can be played.
Whereas an ordinary game has a set of strategy profiles and a subset of those which are Nash equilibria, in an open game the equilibria depend on the context.
This is the key to reuniting game theory and game semantics: we ignore the linguistic coincidence of the term \emph{player}, and instead view an open game as the System and the choice of contexts as the Environment.

The essence of this idea is already contained in the following quote from the introduction of \cite{abramsky_semantics_interaction}: ``If Tom, Tim and Tony converse in a room, then from Tom's point of view, he is the System, and Tim and Tony form the Environment; while from Tim's point of view, he is the System, and Tom and Tony form the Environment.''
The view of open games presented in this paper makes this precise when Tom, Tim and Tony are players in a noncooperative game.

In \cite{hedges_morphisms_open_games} open games were reformulated in terms of \emph{lenses} from functional programming \cite{pickering_gibbons_wu_profunctor_optics}.
This was extremely useful as a technical trick, but lenses are  usually used as destructive update operators on data structures and it is unclear what they have to do with game theory, if anything.
The key was a comment by Dusko Pavlovic to the author that the category of lenses $\Lens$ is a dialectica category \cite{depaiva_dialectica_categories_report}; combined with a game-\emph{semantic} view of dialectica categories \cite{blass91} we can see open games in their true form: as an interleaving of game theory and game semantics.

Specifically we find that an open game is a dialogue of a particular sort played between a system and its environment.
The system is jointly controlled by $n \geq 0$ noncooperative players, each independently optimising a real-valued payoff.
The winning condition turns out to be Nash equilibrium: the goal of the system is to play an equilibrium, and the goal of the environment is to prevent it.
Specifically, an open game consists of three pieces of data: a set $\Sigma$ of strategy profiles, a labelling function $\Sigma \to \{ P\text{-strategies} \}$, and a winning (for $P$) relation $\mathbf E \subseteq \Sigma \times \{ O\text{-strategies} \}$.

Taking a step back, this is a rare example of a cross-link in the family tree of the mathematics of games.
From the common ancestor in Zermelo's theorem \cite{schwalbe_walker_zermelo_early_history_game_theory} there was an almost immediate split, with little contact or commonality between the branches.
One branch led to game theory via \cite{von_neumann_morgenstern_theory_games_economic_behaviour} and \cite{nash_non_cooperative_games}, and eventually found its home as a central tool in microeconomics \cite{osbourne_rubinstein_game_theory}, as well as applications in biology and computer science.
The other branch concerned applications in logic and focussed on two-player zero-sum games, including dialogical semantics \cite{lorenzen_lorenz_dialogische_logik}, Borel games \cite{martin_borel_determinacy} and eventually game semantics in its modern sense \cite{abramsky94,hyland_ong_full_abstraction_PCF,abramsky_jagadeesan_malacaria_full_abstraction_PCF,abramsky_mccusker_game_semantics}.

Perhaps the only systematic attempt to bridge the two branches is the work of van Benthem and collaborators on game logics \cite{van_benthem_logic_games}.
Other examples of more ad-hoc bridges can be found for example in \cite{hankin_malacaria_payoffs_intensionality_abstraction,le_roux_winning_strategy_nash_equilibrium,gutierrez_wooldridge_equilibria_concurrent_games}.
The work of Pavlovic \cite{pavlovic09}, which is not specifically about game semantics, is perhaps the most closely related to this paper.

In the second half of this paper, we apply these ideas to build a compact closed category of `computable open games' by replacing the underlying dialectica category with a wave-style geometry of interaction category, specifically the Int-construction applied to the traced cartesian category of directed-complete partial orders.
(The category of directed-complete partial orders and Scott-continuous maps is a standard setting for the semantics of possibly-nonterminating recursive computations.)
Ultimately we rely on the following transport of structure result:

\begin{proposition}\label{prop:ccc-transport}
	Let $\C$ be a compact closed category, $\D$ a symmetric monoidal category and $F : \C \to \D$ a strict symmetric monoidal functor that is bijective on objects.
	Then $\D$ can be given a compact closed structure, with duals given by $F (X)^* = F (X^*)$, units by $\eta_{F (X)} = F (\eta_X)$ and counits by $\varepsilon_{F (X)} = F (\varepsilon_X)$.
\end{proposition}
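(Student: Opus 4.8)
The plan is to exploit the fact that $F$ is bijective on objects to transport the duality data verbatim, and then to verify the triangle identities (snake equations) in $\D$ by simply applying $F$ to the corresponding identities in $\C$.

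First I would check that the data is well-typed. Since $F$ is bijective on objects, every object of $\D$ is $F(X)$ for a unique object $X$ of $\C$, so the assignments $F(X)^* := F(X^*)$, $\eta_{F(X)} := F(\eta_X)$ and $\varepsilon_{F(X)} := F(\varepsilon_X)$ are unambiguous. Strictness of $F$ gives $F(I) = I$ and $F(A \otimes B) = F(A) \otimes F(B)$ on objects, so $F(\eta_X) \colon F(I) = I \to F(X^* \otimes X) = F(X)^* \otimes F(X)$ and $F(\varepsilon_X) \colon F(X) \otimes F(X)^* = F(X \otimes X^*) \to F(I) = I$ have exactly the types required of a unit and counit for $F(X)$ in $\D$.

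Next, the triangle identities. Each of the two snake equations in $\C$ asserts that a particular composite built from $\eta_X$, $\varepsilon_X$, identities, associators and unitors equals $\id_X$ (respectively $\id_{X^*}$). Apply $F$: being a functor it preserves composites and identities; being a strict monoidal functor it sends a tensor of morphisms to the tensor of their images and carries the associators and unitors of $\C$ to those of $\D$; and by the definitions above it sends $\eta_X \mapsto \eta_{F(X)}$ and $\varepsilon_X \mapsto \varepsilon_{F(X)}$. Hence the $F$-image of the snake equation for $X$ is, symbol for symbol, the snake equation for $F(X)$ with the transported data, and its right-hand side is $F(\id_X) = \id_{F(X)}$. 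Since every object of $\D$ is some $F(X)$, every object of $\D$ acquires a dual satisfying both triangle identities, which is exactly what it means for $\D$ to be compact closed; the symmetry involved is the one $\D$ already carries, and any further structure of a compact closed category (such as the action of $(-)^*$ on morphisms) is a formal consequence of this unit–counit data.

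I do not expect a serious obstacle: the argument is essentially ``apply $F$ to both sides''. The one point that genuinely needs care is the claim that a strict symmetric monoidal functor preserves the coherence isomorphisms $\alpha, \lambda, \rho$ (and $\sigma$) — this is what lets the $F$-image of a $\C$-side snake equation be read off as the $\D$-side one. It is immediate once one recalls that strictness means the comparison maps $F(A) \otimes F(B) \to F(A \otimes B)$ and $I \to F(I)$ are identities, so that the coherence axioms for a monoidal functor degenerate to $F(\alpha) = \alpha$, $F(\lambda) = \lambda$, $F(\rho) = \rho$. Everything else is bookkeeping, and the sole place bijectivity on objects is used is in the well-definedness of the transported data.
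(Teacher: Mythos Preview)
Your proposal is correct and matches the paper's proof almost exactly: both arguments use bijectivity on objects to make the transported data well-defined, and then verify the yanking equations by applying $F$ to the $\C$-side snake equations and using that a strict symmetric monoidal functor preserves composites, tensors, identities and coherence isomorphisms. The paper's proof is slightly terser (it writes out one yanking equation explicitly and says ``similarly for the other''), but there is no substantive difference in approach.
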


\begin{proof}
	The assumption that $F$ is bijective on objects means that every object of $\D$ is uniquely assigned a dual, unit and counit.
	It is simple to check the yanking equations \cite{kelly80}:
	\begin{align*}
		&\rho_{F (X)} \circ (\id_{F (X)} \otimes \varepsilon_{F (X)}) \circ a_{F (X), F (X)^*, F (X)} \circ (\eta_{F (X)} \otimes \id_{F (X)}) \circ \lambda^{-1}_{F (X)} \\
		=\ &F (\rho_X) \circ (F (\id_X) \otimes F (\varepsilon_X)) \circ F (a_{X, X^*, X}) \circ (F (\eta_X) \otimes F (\id_X)) \circ F (\lambda^{-1}_X) \\
		=\ &F (\rho_X \circ (\id_X \otimes \varepsilon_X) \circ a_{X, X^*, X} \circ (\eta_X \otimes \id_X) \circ \lambda^{-1}_X) \\
		=\ &F (\id_X) = \id_{F (X)}
	\end{align*}
	and similarly for the other equation.
\end{proof}

The hypotheses of this theorem are already satisfied by a particular functor $\Lens \to \OG$ that identifies $\Lens$ with the subcategory of \emph{zero-player open games}.
Thus it suffices to replace the source category with one that is compact closed, while preserving the hypotheses (and the game-theoretic interpretation).

We end with a worked example, a `paradoxical' variant of matching pennies where both players have the ability and incentive to play a strategy that is contingent on the other's move - something that appears causally absurd, and can result in the play deadlocking while each player waits for the other to move first.

\section{Dialogues}

While the name `dialectica' should bring to mind dialogues in the tradition of philosophical logic (for example via Hegel's dialectics), this is apparently a coincidence.
The dialectica interpretation is named after the journal Dialectica, who published G\"odel's paper in their Paul Bernays festschrift \cite{godel58}.
But the dialectica interpretation does have a very dialectical feeling to it.

The game semantic viewpoint on G\"odel's dialectica interpretation \cite{avigad98} and de Paiva's dialectica categories \cite{depaiva_dialectica_categories_report} was described in Blass' paper that first introduced game semantics \cite{blass91}.
%It has not been considered much since, and is now largely folklore.
In this section we recall this viewpoint in detail.

We first introduce a category $\Lens$ of dialogues and strategies, which is the dialectica category over an inconsistent (1-valued) logic.
%Equivalently, it is a category of datatypes and \emph{lenses} (specifically polymorphic lenses without a polymorphic type system, called `bimorphic lenses' by the author).
%The game-semantic account of lenses given here is novel.
%Later we will extend dialogues to games and restrict strategies to winning strategies, producing the dialectica category $\D$, and also restrict to a category of \emph{lenses}.

An object of $\Lens$ is a 2-stage dialogue $X^+; S^-$ in which first the System chooses $x : X$, and then the Environment chooses $s : S$, where $X$ and $S$ are any sets.
This breaks a common requirement in game semantics that the Environment moves first.
We denote the dialogue $X^+; S^-$ by $\diset X S$.

%In a typical play of $\diset X S$, in the language of lenses, the System first chooses a data structure, and then the Environment chooses a modification of it.

Notice that the set of $P$-strategies for $\diset X S$ is $X$, and the set of $O$-strategies is $S^X$, the set of functions $X \to S$.

We introduce a monoidal product operator given by synchronous parallel play.
Specifically, the parallel play of $\diset XS$ and $\diset Y R$ is the 4-stage dialogue $X^+; Y^+; R^-; S^-$.
This peculiar ordering of moves, with the right-hand dialogue being played in the middle of the left-hand dialogue, is characteristic of dialectica.
This 4-stage dialogue is strategically equivalent to the 2-stage dialogue $(X \times Y)^+; (R \times S)^-$, so we set $\diset X S \otimes \diset Y R = \diset{X \times Y}{R \times S}$.

Next, given objects $\diset X S$ and $\diset Y R$, we consider the same 4-stage dialogue but with the players interchanged in the former.
That is, we consider the dialogue $X^-; Y^+; R^-; S^+$.
We consider this to be $\diset Y R$ played relative to $\diset X S$, and denote it by $\diset X S \to \diset Y R$.

%In a typical play of $\diset X S \to \diset Y R$, in the language of lenses, the Environment first chooses a data structure $x$, and then the System chooses a \emph{view} $y$ of it, such as a projection.
%The Environment then chooses a modification of $y$, and then the System must propagate the corresponding update to $x$.

The set of $P$-strategies for $\diset X S \to \diset Y R$ is $Y^X \times S^{X \times R}$, or isomorphically $(Y \times S^R)^X$.
The set of $O$-strategies is $X \times R^Y$.
We denote the set of $P$-strategies for $\diset X S \to \diset Y R$ by $\Lens \left( \diset X S, \diset Y R \right)$.
As the notation suggests, these are the morphisms of $\Lens$.

Given an object $\diset X S$, there is a \emph{copycat} $P$-strategy for $\diset X S \to \diset X S = X^-; X^+; S^-; S^+$.
As an element of $X^X \times S^{X \times S}$ it is the pair consisting of the identity and the projection.
This is the identity morphism for $\diset X S$.
Following \cite{abramsky_semantics_interaction} we denote this strategy by a string diagram:
\begin{center} \begin{tikzpicture}
	\node (Xm) at (0, 0) {$X^-$}; \node (Xp) at (1, 0) {$X^+$}; \node (Sm) at (2, 0) {$S^-$}; \node (Sp) at (3, 0) {$S^+$};
	\draw [->] (Xm) to [out=90, in=90] (Xp); \draw [->] (Sm) to [out=90, in=90] (Sp);
	\draw [->] (0, -1) to (Xm); \draw [->] (Xp) to (1, -1); \draw [->] (2, -1) to (Sm); \draw [->] (Sp) to (3, -1);
\end{tikzpicture} \end{center}

A major theme of this paper is that we take this notation seriously, pushing it far beyond what was originally intended.
While it is common for papers to contain a caveat that string diagrams are `officially' informal pending a coherence theorem, in this case they are far more informal than usual: it is completely unclear what category they live in, or exactly which topological moves they are invariant under.
While there is an immediate surface similarity to grammatical reductions in pregroups \cite{preller_lambek_free_compact_2_categories,coecke_sadrzadeh_clarke_discocat}, there appears to be a much deeper connection to string diagrams in the bicategory of finite product categories, Tambara modules (profunctors compatible with the cartesian product) and natural transformations \cite{boisseau_string_diagrams_optics} (see also \cite{pastro-street-doubles-monoidal-categories}), something we leave for later work.

%In terms of data structures, in the copycat strategy the System always chooses the trivial view that leaves the original structure unchanged, and then repeats the Environment's modification exactly.

Now suppose we are given $P$-strategies $\lambda$ for $\diset X S \to \diset Y R = X^-; Y^+; R^-; S^+$ and $\mu$ for $\diset Y R \to \diset Z Q = Y^-; Z^+; Q^-; R^+$.
There is a way to combine them to produce a $P$-strategy $\mu \circ \lambda$ for $\diset X S \to \diset Z Q = X^-; Z^+; Q^-; S^+$.
Namely, $P$ \emph{simulates} playing the two together with $O$ playing a copycat strategy for the middle moves.
That is, she simulates the 8-stage dialogue
\[ X^-; Y^+; Y^-; Z^+; Q^-; R^+; R^-; S^+ \]
with the assumption that $O$ uses a copycat strategy for the moves $Y^-$ and $R^-$.
By then hiding the $Y$ and $R$ moves we get a $P$-strategy for the required 4-stage dialogue.

We denote this $P$-strategy by the following string diagram:
\begin{center} \begin{tikzpicture}
	\node (lambda) [isosceles triangle, shape border rotate=90, isosceles triangle apex angle=90, minimum width=3.5cm, draw] at (3.5, 5) {$\lambda$};
	\node (mu) [isosceles triangle, shape border rotate=90, isosceles triangle apex angle=90, minimum width=3.5cm, draw] at (3.5, 2) {$\mu$};
	\node (Xm) at (0, 0) {$X^-$}; \node (Yp) at (1, 0) {$Y^+$}; \node (Ym) at (2, 0) {$Y^-$}; \node (Zp) at (3, 0) {$Z^+$};
	\node (Qm) at (4, 0) {$Q^-$}; \node (Rp) at (5, 0) {$R^+$}; \node (Rm) at (6, 0) {$R^-$}; \node (Sp) at (7, 0) {$S^+$};
	\draw [->] (Xm) to [out=90, in=-90] (lambda.south -| Ym); \draw [->] (lambda.south -| Zp) to [out=-90, in=90] (Yp);
	\draw [->] (Ym) to [out=90, in=-90] (mu.south -| Ym); \draw [->] (mu.south -| Zp) to [out=-90, in=90] (Zp);
	\draw [->] (Qm) to [out=90, in=-90] (mu.south -| Qm); \draw [->] (mu.south -| Rp) to [out=-90, in=90] (Rp);
	\draw [->] (Rm) to [out=90, in=-90] (lambda.south -| Qm); \draw [->] (lambda.south -| Rp) to [out=-90, in=90] (Sp);
	\draw [->] (Yp) to [out=-90, in=-90] (Ym); \draw [->] (Rp) to [out=-90, in=-90] (Rm);
	\draw [->] (0, -1) to (Xm); \draw [->] (Zp) to (3, -1); \draw [->] (4, -1) to (Qm); \draw [->] (Sp) to (7, -1);
\end{tikzpicture} \end{center}
%\begin{center} \begin{tikzpicture}
%	\node (lambda) [isosceles triangle, shape border rotate=90, isosceles triangle apex angle=90, minimum width=3.5cm, draw] at (1.5, 3) {$\lambda$};
%	\node (mu) [isosceles triangle, shape border rotate=90, isosceles triangle apex angle=90, minimum width=3.5cm, draw] at (5.5, 3) {$\mu$};
%	\node (Xm) at (0, 0) {$X^-$}; \node (Yp) at (1, 0) {$Y^+$}; \node (Ym) at (2, 0) {$Y^-$}; \node (Zp) at (3, 0) {$Z^+$};
%	\node (Qm) at (4, 0) {$Q^-$}; \node (Rp) at (5, 0) {$R^+$}; \node (Rm) at (6, 0) {$R^-$}; \node (Sp) at (7, 0) {$S^+$};
%	\draw [->] (Xm) to (lambda.south -| Xm); \draw [->] (lambda.south -| Yp) to (Yp);
%	\draw [->] (Ym) to [out=90, in=-90] (mu.south -| Qm); \draw [->] (mu.south -| Rp) to [out=-90, in=90] (Zp);
%	\draw [->] (Qm) to [out=90, in=-90] (mu.south -| Rm); \draw [->] (mu.south -| Sp) to [out=-90, in=90] (Rp);
%	\draw [->] (Rm) to [out=90, in=-90] (lambda.south -| Ym); \draw [->] (lambda.south -| Zp) to [out=-90, in=90] (Sp);
%	\draw [->] (Yp) to [out=-90, in=-90] (Ym); \draw [->] (Rp) to [out=-90, in=-90] (Rm);
%	\draw [->] (0, -1) to (Xm); \draw [->] (Zp) to (3, -1); \draw [->] (4, -1) to (Qm); \draw [->] (Sp) to (7, -1);
%\end{tikzpicture} \end{center}
Whereas the cap denotes a copycat $P$-strategy, the cup denotes a copycat $O$-strategy.

%It turns out that in terms of data structures this is the correct way to propagate back several stages of nested views within a data structure.
%It is for this reason that lenses are so ubiquitous in pure functional programming.

A little calculation shows that if $\lambda$ is given by $v_\lambda : X \to Y$ and $u_\lambda : X \times R \to S$, and $\mu$ is given by $v_\mu : Y \to Z$ and $u_\mu : Y \times Q \to R$, then the composite is given by
\[ v_{\mu \circ \lambda} (x) = v_\mu (v_\lambda (x)) \]
and
\[ u_{\mu \circ \lambda} (x, q) = u_\lambda (x, u_\mu (v_\lambda (x), q)) \]
It is routine to check that this is associative, with identities given by copycat.
Thus $\Lens$ is indeed a category.
These equations are commonly known in functional programming as composition of lenses \cite{foster_etal_combinators_bidirectional_tree_transformations,gibbons_stevens_bidirectional_transformations}.

Given this category structure we can also make $\otimes$ into a genuine symmetric monoidal product.
Given $P$-strategies $\lambda : \diset{X_1}{S_1} \to \diset {Y_1}{R_1}$ and $\mu : \diset{X_2}{S_2} \to \diset{Y_2}{R_2}$, we can combine them to produce a $P$-strategy $\lambda \otimes \mu : \diset{X_1 \times X_2}{S_2 \times S_1} \to \diset{Y_1 \times Y_2}{R_2 \times R_1}$.

Finally, we notice that all of the above can be generalised to any base category $\C$ with finite products, replacing sets and functions, yielding a category $\Lens (\C)$ whose morphisms are strategies internal to $\C$.
Specifically, we set
\[ \Lens (\C) \left( \diset X S, \diset Y R \right) = \C (X, Y) \times \C (X \times R, S) \]
(By writing it this way, we do not need to assume that $\C$ is cartesian closed.)
The category we have been considering so far is $\Lens = \Lens (\Set)$.

\begin{proposition}
	For any category $\C$ with finite products, $\Lens (\C)$ is a symmetric monoidal category.
\end{proposition}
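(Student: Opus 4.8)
The plan is to exhibit the symmetric monoidal structure by hand, reducing everything to the finite products of $\C$ together with the lens composition formula already established above. On objects we set $\diset X S \otimes \diset Y R = \diset{X \times Y}{R \times S}$ with unit $I = \diset 1 1$, and on morphisms $\lambda \otimes \mu$ carries the lens data $(v_\lambda, u_\lambda)$, $(v_\mu, u_\mu)$ to $\bigl(v_\lambda \times v_\mu,\ (x_1, x_2, r_2, r_1) \mapsto (u_\mu(x_2, r_2),\, u_\lambda(x_1, r_1))\bigr)$. For the structure isomorphisms I would first observe that there is an identity-on-objects functor $J : \C \times \C^\op \to \Lens(\C)$ sending $(f : X \to Y,\, g : R \to S)$ to the lens $(f,\, g \circ \pi_R) : \diset X S \to \diset Y R$; functoriality of $J$ is a one-line check against the formula $u_{\mu \circ \lambda}(x, q) = u_\lambda(x, u_\mu(v_\lambda(x), q))$, and $J$ sends the identity pair to the copycat. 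Now $\C \times \C^\op$ carries an evident symmetric monoidal structure assembled from the finite products of $\C$ (with the second factor treated contravariantly and the tensor twisted so that $(X, S) \otimes (Y, R) = (X \times Y,\, R \times S)$), and I would \emph{define} the associator, unitors and symmetry of $\Lens(\C)$ to be the $J$-images of those of $\C \times \C^\op$; these are automatically isomorphisms in $\Lens(\C)$, with inverses computed by the lens composition formula. A short direct computation then shows that $J$ is strict symmetric monoidal, i.e. that the morphism formula for $\otimes$ restricts along $J$ to the tensor of $\C \times \C^\op$.

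Granting this, the coherence axioms come for free. The triangle, pentagon, both hexagons and $\sigma \circ \sigma = \id$ are equations between pasting composites of structure isomorphisms, hence between $J$-images of composites of the structure isomorphisms of $\C \times \C^\op$, where those equations already hold (coherence in any symmetric monoidal category); since $J$ preserves composition and the tensor on the morphisms involved, the equations transport verbatim to $\Lens(\C)$. What remains to be checked by explicit calculation is only (i) that $\otimes$ is a bifunctor, i.e. $\id \otimes \id = \id$ and $(\lambda' \otimes \mu') \circ (\lambda \otimes \mu) = (\lambda' \circ \lambda) \otimes (\mu' \circ \mu)$, and (ii) naturality of the associator, unitors and symmetry against \emph{arbitrary} lenses, not merely those in the image of $J$. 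Each of these reduces to a pair of equations in $\C$: the one on forward components is an instance of bifunctoriality of $\times$ or of naturality of a cartesian structure map of $\C$, and the one on backward components follows by expanding the lens composition formula and collapsing the resulting projections. As every construction in sight is built functorially from finite products, one may reason with generalized elements of $\C$, so this bookkeeping is literally the same as for $\C = \Set$.

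The main obstacle is step (i): verifying bifunctoriality means tracking how the nested backward maps of a lens composite interact with the projections that constitute the backward component of $\lambda \otimes \mu$, and this is exactly where the ``reversed'' order $R \times S$ in the tensor and the characteristic dialectica phenomenon of playing the right-hand dialogue in the middle of the left-hand one must be handled with care; once the indices are lined up the equation drops out. (Alternatively, the whole proposition can be obtained by citation: $\Lens(\C)$ is the degenerate instance of de Paiva's dialectica category $\mathrm{Dial}(\C)$ in which the fibrewise logic is one-valued, and those categories are known to be symmetric monoidal \cite{depaiva_dialectica_categories_report}; but the direct verification is short enough to be worth giving.)
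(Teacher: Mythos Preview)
Your proof is correct. The paper itself does not prove this proposition: it is stated and the text moves on, implicitly treating it as known (and indeed your parenthetical citation route via de Paiva's dialectica categories is essentially the justification the paper has in mind, given the surrounding discussion). Your explicit argument via the identity-on-objects functor $J : \C \times \C^\op \to \Lens(\C)$ is a tidy way to organise the verification: once $J$ is seen to be a functor and strict monoidal on the nose, all coherence axioms are inherited from $\C \times \C^\op$, and only bifunctoriality of $\otimes$ and naturality of the structure maps against \emph{general} lenses remain as honest computations---exactly the two items you flag. So your write-up supplies substantially more detail than the paper does, and the strategy of transporting coherence along $J$ is a genuine labour-saving device rather than a restatement of anything in the text.
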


There is a much less obvious generalisation of $\Lens (\C)$ when $\C$ is only a monoidal category \cite{riley_categories_of_optics}, but we will not need it in this paper.

\section{Negation and $O$-strategies}

To talk about open games, we need to talk explicitly about $O$-strategies in a dialogue.
However, the categorical structure of $\Lens$ is built on $P$-strategies.
In turns out, however, that we can use $P$-strategies to talk about $O$-strategies, in a way that respects composition.

The monoidal unit of $\Lens$ is the trivial game $I = \diset 1 1 = 1^+; 1^-$.
The dialogue $I \to \diset X S$ is $1^-; X^+; S^-; 1^+$, which is strategically equivalent to $\diset X S$.
Thus the set of $P$-strategies for $I \to \diset X S$ is $X$.

If we fix a $P$-strategy $h : X$ for $\diset X S$ and another $P$-strategy $\lambda : \diset X S \to \diset Y R$, we can compose them to yield a $P$-strategy $\lambda \circ h$ for $\diset Y R$, by
\[ I \overset{h}\longrightarrow \diset X S \overset{\lambda}\longrightarrow \diset Y R \]

Succinctly, there is a functor $\mathbb V : \Lens \to \Set$ taking every object to its set of $P$-strategies, namely the covariant functor represented by $I$.
Explicitly, $\mathbb V \diset X S = X$ and $\mathbb V (\lambda) = v_\lambda$.

On the other hand, the dialogue $\diset X S \to I$ is $X^-; 1^+; 1^-; S^+$, which is equivalent to $X^-; S^+$.
This is not an object, but is $\diset X S$ with players interchanged.
Thus the set of $P$-strategies for $\diset X S \to I$ is equal to the set of $O$-strategies for $\diset X S$, namely $S^X$.

%We write $\diset X S^\bot = \diset X S \to I$ for the interchange of players.
%Of course, objects are not closed under this operation.

Given an $O$-strategy $k$ for $\diset Y R$ and a $P$-strategy $\lambda : \diset X S \to \diset Y R$, we obtain an $O$-strategy $k \circ \lambda$ for $\diset X S$ by
\[ \diset X S \overset{\lambda}\longrightarrow \diset Y R \overset{k}\longrightarrow I \]
In this, $O$ `hijacks' $P$'s strategy to produce an element of $S$, since $\lambda$ is a $P$-strategy for a dialogue in which $P$ plays the role of $O$ in $\diset X S$.

Succinctly, there is a functor $\mathbb K : \Lens^\op \to \Set$ taking every object to its set of $O$-strategies, namely the contravariant functor represented by $I$.
In the terminology of categorical quantum mechanics, $P$-strategies are \emph{states} and $O$-strategies are \emph{effects}.

Since an $O$-strategy for $\diset X S \to \diset Y R$ is precisely an element of $X \times Y^R$, it can be equivalently seen as a $P$-strategy for $\diset X S$ and an $O$-strategy for $\diset Y R$.
This defines a functor $\overline{\Lens} : \Lens \times \Lens^\op \to \Set$, namely
\[ \Lens \times \Lens^\op \xrightarrow{\mathbb V \times \mathbb K} \Set \times \Set \overset\times\longrightarrow \Set \]
%(where $\cohom$ is intended to stand for `cohomomorphism', not `cohomology'!)
On objects, it is concretely given by $\overline{\Lens} \left( \diset X S, \diset Y R \right) = X \times R^Y$, or more generally over a category $\C$ with finite products, $\overline{\Lens (\C)} \left( \diset X S, \diset Y R \right) = \C (1, X) \times \C (Y, R)$.

Given an $O$-strategy $\kappa = (h, k)$ for $\diset{X_1}{S_1} \to \diset{Y_1}{R_1}$, a $P$-strategy $\lambda : \diset{X_1}{S_1} \to \diset{X_2}{S_2}$ and a $P$-strategy $\mu : \diset{Y_2}{R_2} \to \diset{Y_1}{R_1}$, we obtain an $O$-strategy $\overline{\Lens} (\lambda, \mu) (h, k) = (\lambda \circ h, k \circ \mu)$ for $\diset{X_2}{S_2} \to \diset{Y_2}{R_2}$.
This is the $O$-strategy for the dialogue
\[ X_1^-; X_2^+; X_2^-; Y_2^+; Y_2^-; Y_1^+; R_1^-; R_2^+; R_2^-; S_2^+; S_2^-; S_1^+ \]
with appropriately hidden copycat moves, as given by the string diagram
\begin{center} \begin{tikzpicture}
	\node (lambda) [isosceles triangle, shape border rotate=90, isosceles triangle apex angle=120, minimum width=3.5cm, draw] at (5.5, 4) {$\lambda$};
	\node (mu) [isosceles triangle, shape border rotate=90, isosceles triangle apex angle=120, minimum width=3.5cm, draw] at (5.5, 2) {$\mu$};
	\node (X1m) at (0, 0) {$X_1^-$}; \node (X2p) at (1, 0) {$X_2^+$}; \node (X2m) at (2, 0) {$X_2^-$}; \node (Y2p) at (3, 0) {$Y_2^+$};
	\node (Y2m) at (4, 0) {$Y_2^-$}; \node (Y1p) at (5, 0) {$Y_1^+$}; \node (R1m) at (6, 0) {$R_1^-$}; \node (R2p) at (7, 0) {$R_2^+$};
	\node (R2m) at (8, 0) {$R_2^-$}; \node (S2p) at (9, 0) {$S_2^+$}; \node (S2m) at (10, 0) {$S_2^-$}; \node (S1p) at (11, 0) {$S_1^+$};
	\node (kappa)  [isosceles triangle, shape border rotate=270, isosceles triangle apex angle=120, minimum width=3.5cm, draw] at (5.5, -3) {$\kappa$};
	\draw [->] (kappa.north -| Y2m) to [out=90, in=-90] (X1m); \draw [->] (X1m) to [out=90, in=-90] (lambda.south -| Y2m);
	\draw [->] (lambda.south -| Y1p) to [out=-90, in=90] (X2p); \draw [->] (X2p) to [out=-90, in=-90] (X2m);
	\draw [->] (X2m) to (2, 5); \draw [->] (3, 5) to (Y2p);
	\draw [->] (Y2p) to [out=-90, in=-90] (Y2m); \draw [->] (Y2m) to [out=90, in=-90] (mu.south -| Y2m);
	\draw [->] (mu.south -| Y1p) to [out=-90, in=90] (Y1p); \draw [->] (Y1p) to (kappa.north -| Y1p);
	\draw [->] (kappa.north -| R1m) to (R1m); \draw [->] (R1m) to [out=90, in=-90] (mu.south -| R1m);
	\draw [->] (mu.south -| R2p) to [out=-90, in=90] (R2p); \draw [->] (R2p) to [out=-90, in=-90] (R2m);
	\draw [->] (R2m) to (8, 5); \draw [->] (9, 5) to (S2p);
	\draw [->] (S2p) to [out=-90, in=-90] (S2m); \draw [->] (S2m) to [out=90, in=-90] (lambda.south -| R1m);
	\draw [->] (lambda.south -| R2p) to [out=-90, in=90] (S1p); \draw [->] (S1p) to [out=-90, in=90] (kappa.north -| R2p);
\end{tikzpicture} \end{center}

\section{Open games}

We can now give an equivalent definition of open games \cite{hedges_towards_compositional_game_theory,hedges_etal_compositional_game_theory} in terms of dialogues.
The treatment in this section and the next will be conceptual, with examples deferred until the end of section \ref{sec:picturing} after building up some theory.

An open game $\diset X S \to \diset Y R$ is in one dimension a dialogue played between a System and an Environment, and in another dimension it is a non-cooperative game in the sense of economics, in which several players jointly control the System while independently optimising payoffs.

An open game $\G : \diset X S \to \diset Y R$ is defined by three pieces of data:
\begin{itemize}
	\item A set $\Sigma_\G$ of strategy profiles
	\item A labelling function $\G_- : \Sigma_\G \to \Lens \left( \diset X S, \diset Y R \right)$, by which every element $\sigma : \Sigma_\G$ labels a $P$-strategy $\G_\sigma$ for the 4-stage dialogue $\diset X S \to \diset Y R$
	\item A winning condition, which is a relation between $\Sigma_\G$ and the set of $O$-strategies of $\diset X S \to \diset Y R$, namely $|\G| \subseteq \Sigma_\G \times \overline{\Lens} \left( \diset X S, \diset Y R \right)$.
\end{itemize}

We write $|\G|^\sigma_\kappa$ for $(\sigma, \kappa) \in |\G|$.
We say that $\sigma$ is a \emph{winning strategy profile} if $|\G|^\sigma_\kappa$ for all $O$-strategies $\kappa : \overline{\Lens} \left( \diset X S, \diset Y R \right)$.

We interpret $|\G|$ as an equilibrium condition.
That is, from the dialogue perspective the goal of the System is to reach equilibrium and the goal of the Environment is to prevent equilibrium.
In real examples there is rarely a winning strategy profile, and so we focus on $|\G|$ as a binary relation, or ask about winning strategy profiles for the System against a fixed $O$-strategy.

From the dialogue perspective, the order of play in an open game $\diset X S \to \diset Y R$ is:
\begin{enumerate}
	\item The Environment chooses an initial state of the game from $X$
	\item The System chooses the final state of the game from $Y$
	\item The Environment chooses payoffs for the System from $R$
	\item The System chooses payoffs for the Environment from $S$
\end{enumerate}

An $O$-strategy is a pair $\kappa = (h, k)$ where $h : X$ and $k : Y \to R$.
The \emph{history} $h$ determines the \emph{initial state} of the game.
The \emph{continuation} $k$ determines the payoffs for System given the final state.
The pair $(h, k)$ completely determines the strategic context in which the players that make up System make their choices, reducing the open game to an ordinary normal-form game.
For this reason, we also call an $O$-strategy a \emph{context} for the open game.

We only need two families of examples of open games to generate a large family of examples, corresponding roughly to extensive-form games, using the sequential and parallel play operators we will define in the next section.
These two generating families are the \emph{zero-player open games} and the \emph{decisions}, which could loosely be called \emph{one-player open games}.

The zero-player open games $\diset X S \to \diset Y R$ are in bijection with the $P$-strategies $\lambda : \diset X S \to \diset Y R$, and correspond to the situation in which the System has no strategic choices but always follows the strategy $\lambda$ like an automaton.
Specifically, the zero-player open game $\lambda$ is defined by:
\begin{itemize}
	\item The set of strategy profiles is the singleton $\Sigma_{\lambda} = \{ * \}$, where $*$ is a token representing the $P$-strategy $\lambda$
	\item The labelling function is $\lambda_{*} = \lambda$
	\item $*$ is a winning strategy profile, that is, $|\lambda|^*_\kappa$ for all $O$-strategies $\kappa$
\end{itemize}

Perhaps the only surprising part of this definition is that $*$ is a winning strategy profile.
The reason for this ultimately comes down to agreeing with Nash equilibrium on real examples.
Nash equilibrium is a \emph{negative} definition: a strategy profile should fail to be a Nash equilibrium if some particular player has positive incentive to deviate from it.
Since there are no players in $\lambda$, $*$ is declared a Nash equilibrium by default.

The second family of examples are the decisions.
There is one such open game $\D = \D_{Y | X} : \diset X 1 \to \diset Y \R$ for every nonempty set $X$ and $Y$, representing a single agent's choice from $Y$ given an observation from $X$.
In this game:
\begin{enumerate}
	\item The Environment chooses an initial state from $X$
	\item The (now unique) Player chooses a final state from $Y$
	\item The Environment chooses a payoff from $\R$
\end{enumerate}
The winning condition of this game is \emph{intensional} by being a property of the \emph{strategies} of both Player and Environment, and cannot be written in terms of the play alone.
This is because optimality in game theory is a counterfactual: \emph{if} the System had made a different choice then the resulting payoff \emph{would have} been lower.

Observe that a $P$-strategy for this game is a function $\sigma : X \to Y$, and we choose the set of strategy profiles $\Sigma_{\D_{Y | X}}$ to be precisely the set of $P$-strategies.
An $O$-strategy is a pair $(h, k)$ where $h : X$ and $k : Y \to \R$.
By definition, the Player wins this game iff $\sigma (h) \in \arg\max (k)$, that is to say, if $k (\sigma (h)) \geq k (y)$ for all $y : Y$.

This is a small shift in perspective that is quite natural from the perspective of game semantics.
In game theory there is no concept of \emph{winning}, only optimality and equilibrium.
Declaring a player to have \emph{won} if they make an optimal choice may not be meaningful as game theory, but it is appropriate terminology when combining game theory with game semantics.

Writing this out:
\begin{itemize}
	\item The set of strategy profiles is $\Sigma_\D = Y^X$
	\item The labelling function takes $\sigma : X \to Y$ to itself considered as a $P$-strategy $\D_\sigma : \diset X 1 \to \diset Y \R$, via the bijection $\Lens \left( \diset X 1, \diset Y \R \right) \cong Y^X$
	\item The winning condition is $|\D|^\sigma_{h, k}$ iff $\sigma (h) \in \arg\max (k)$
\end{itemize}

\section{Composing open games}

We can make open games into the morphisms of a symmetric monoidal category.
The two composition operators, categorical composition and tensor product, correspond to \emph{sequential play} and \emph{simultaneous play}.

Suppose we are given open games $\G : \diset X S \to \diset Y R$ and $\H : \diset Y R \to \diset Z Q$.
The sequential composition $\H \circ \G : \diset X S \to \diset Z Q$ has set of strategy profiles $\Sigma_{\H \circ \G} = \Sigma_\G \times \Sigma_\H$.
Informally, the idea is that $\G$ and $\H$ are each associated with sets $G, H$ of decisions.
Each decision $g \in G, h \in H$ has an associated set $\Sigma_g, \Sigma_h$ of strategies, and the set of strategy profiles in each case should be thought of as the set of tuples of strategies, one for each decision: $\Sigma_\G = \prod_{g \in G} \Sigma_g$ and $\Sigma_\H = \prod_{h \in H} \Sigma_h$.
The set of decisions made in a composite game is the disjoint union of the decisions made in the components, and so $\Sigma_{\H \circ \G} = \prod_{g \in G + H} \Sigma_g = \prod_{g \in G} \Sigma_g \times \prod_{h \in H} \Sigma_h = \Sigma_\G \times \Sigma_\H$.

The labelling function for a sequential composition can be defined using the underlying composition in $\Lens$: $(\H \circ \G)_{\sigma, \tau} = \H_\tau \circ \G_\sigma$.

In order to define the winning condition of $\H \circ \G$, we must modify a context for $\H \circ \G$ into contexts for $\G$ and $\H$.
We can do this using the fact that $\overline{\Lens}$ is a functor, together with the fact that we have strategy profiles for $\G$ and $\H$ available.
A strategy profile $(\sigma, \tau)$ for $\H \circ \G$ is winning (that is to say, a Nash equilibrium) against the $O$-strategy $\kappa$ iff $\sigma$ is winning in $\G$ against the $O$-strategy $\overline{\Lens} (\id, \H_\tau) (\kappa)$, and $\tau$ is winning in $\H$ against the $O$-strategy $\overline{\Lens} (\G_\sigma, \id) (\kappa)$.
That is to say,
\[ |\H \circ \G|^{\sigma, \tau}_\kappa \iff |\G|^\sigma_{\overline{\Lens} (\id, \H_\tau) (\kappa)} \wedge |\H|^\tau_{\overline{\Lens} (\G_\sigma, \id) (\kappa)} \]

This makes open games into the morphisms of a category (or, more properly, the 1-cells of a bicategory).

Next we consider simultaneous play.
Given open games $\G : \diset{X_1}{S_1} \to \diset{Y_1}{R_1}$ and $\H : \diset{X_2}{S_2} \to \diset{Y_2}{R_2}$, we combine them to form an open game
\[ \G \otimes \H : \diset{X_1 \times X_2}{S_2 \times S_1} \to \diset{Y_1 \times Y_2}{R_2 \times R_1} \]

As before the strategy profiles of $\G \otimes \H$ are pairs, $\Sigma_{\G \otimes \H} = \Sigma_\G \times \Sigma_\H$, for the same reason as before: we take the disjoint union of the set of decisions.
The strategy profile $(\sigma, \tau)$ labels the synchronous parallel play of $\G_\sigma$ and $\H_\tau$, that is, $(\G \otimes \H)_{\sigma, \tau} = \G_ \sigma \otimes \H_\tau$.

In order to define the winning condition for $\G \otimes \H$ we need to do some more work.

Given strategy profiles $\sigma : \Sigma_\G$ and $\tau : \Sigma_\H$, and an $O$-strategy $\kappa$ for $\diset{X_1 \times X_2}{S_2 \times S_1} \to \diset{Y_1 \times Y_2}{R_2 \times R_1}$, we need to `project' $\kappa$ to $\G$ and $\H$'s view of it, as $O$-strategies for $\diset{X_1}{S_1} \to \diset{Y_1}{R_1}$ and $\diset{X_2}{S_2} \to \diset{Y_2}{R_2}$.

We can indeed do this.
To produce an $O$-strategy for $\diset{X_1}{S_1} \to \diset{Y_1}{R_1}$, consider the dialogue
\[ X_1^-; Y_1^+; X_2^-; Y_2^+; R_2^-; S_2^+; R_1^-; S_1^+; \]
with the strategy
\begin{center} \begin{tikzpicture}
	\node (lambda)  [isosceles triangle, shape border rotate=90, isosceles triangle apex angle=120, minimum width=3.5cm, draw] at (3.5, 2) {$\H_\tau$};
	\node (X1m) at (0, 0) {$X_1^-$}; \node (Y1p) at (1, 0) {$Y_1^+$}; \node (X2m) at (2, 0) {$X_2^-$}; \node (Y2p) at (3, 0) {$Y_2^+$};
	\node (R2m) at (4, 0) {$R_2^-$}; \node (S2p) at (5, 0) {$S_2^+$}; \node (R1m) at (6, 0) {$R_1^-$}; \node (S1p) at (7, 0) {$S_1^+$};
	\node (kappa) [isosceles triangle, shape border rotate=270, isosceles triangle apex angle=135, minimum width=7.5cm, draw] at (3.5, -3) {$\kappa$};
	\draw [->] (kappa.north -| X1m) to (X1m); \draw [->] (X1m) to (0, 3);
	\draw [->] (1, 3) to (Y1p); \draw [->] (Y1p) to [out=-90, in=90] (kappa.north -| X2m);
	\draw [->] (kappa.north -| Y1p) to [out=90, in=-90] (X2m); \draw [->] (X2m) to (lambda.south -| X2m);
	\draw [->] (lambda.south -| Y2p) to (Y2p); \draw [->] (Y2p) to (kappa.north -| Y2p);
	\draw [->] (kappa.north -| R2m) to (R2m); \draw [->] (R2m) to (lambda.south -| R2m);
	\draw [->] (lambda.south -| S2p) to (S2p); \draw [->] (S2p) to [out=-90, in=90] (kappa.north -| R1m);
	\draw [->] (kappa.north -| S2p) to [out=90, in=-90] (R1m); \draw [->] (R1m) to (6, 3);
	\draw [->] (7, 3) to (S1p); \draw [->] (S1p) to (kappa.north -| S1p);
\end{tikzpicture} \end{center}
We call this $O$-strategy $\H_\tau / \kappa$.
When $\kappa = ((h_1, h_2), k)$ for $k : Y_1 \times Y_2 \to R_2 \times R_1$, we write $\H_\tau / \kappa = (h_1, k_1^{h_2, \H_\tau})$.
Concretely, the new continuation is $k_1^{h_2, \H_\tau} (y_1) = k (y_1, v_{\H_\tau} (h_2))_2$.

Similarly, we can produce an $O$-strategy $\G_\sigma \setminus \kappa$ for $\diset{X_2}{S_2} \to \diset{Y_2}{R_2}$ by considering the same dialogue with the strategy
\begin{center} \begin{tikzpicture}
	\node (lambda)  [isosceles triangle, shape border rotate=90, isosceles triangle apex angle=120, minimum width=3.5cm, draw] at (3.5, 3) {$\G_\sigma$};
	\node (X1m) at (0, 0) {$X_1^-$}; \node (Y1p) at (1, 0) {$Y_1^+$}; \node (X2m) at (2, 0) {$X_2^-$}; \node (Y2p) at (3, 0) {$Y_2^+$};
	\node (R2m) at (4, 0) {$R_2^-$}; \node (S2p) at (5, 0) {$S_2^+$}; \node (R1m) at (6, 0) {$R_1^-$}; \node (S1p) at (7, 0) {$S_1^+$};
	\node (kappa) [isosceles triangle, shape border rotate=270, isosceles triangle apex angle=135, minimum width=7.5cm, draw] at (3.5, -3) {$\kappa$};
	\draw [->] (kappa.north -| X1m) to (X1m); \draw [->] (X1m) to [out=90, in=-90] (lambda.south -| X2m);
	\draw [->] (lambda.south -| Y2p) to [out=-90, in=90] (Y1p); \draw [->] (Y1p) to [out=-90, in=90] (kappa.north -| X2m);
	\draw [->] (kappa.north -| Y1p) to [out=90, in=-90] (X2m); \draw [->] (X2m) to [out=90, in=-90] (0, 3) to (0, 4);
	\draw [->] (1, 4) to (1, 3) to [out=-90, in=90] (Y2p); \draw [->] (Y2p) to (kappa.north -| Y2p);
	\draw [->] (kappa.north -| R2m) to (R2m); \draw [->] (R2m) to [out=90, in=-90] (6, 3) to (6, 4);
	\draw [->] (7, 4) to (7, 3) to [out=-90, in=90] (S2p); \draw [->] (S2p) to [out=-90, in=90] (kappa.north -| R1m);
	\draw [->] (kappa.north -| S2p) to [out=90, in=-90] (R1m); \draw [->] (R1m) to [out=90, in=-90] (lambda.south -| R2m);
	\draw [->] (lambda.south -| S2p) to [out=-90, in=90] (S1p); \draw [->] (S1p) to (kappa.north -| S1p);
\end{tikzpicture} \end{center}
When $\kappa = ((h_1, h_2), k)$ we write $\G_\sigma \setminus \kappa = (h_2, k_2^{h_1, \G_\sigma})$, where $k_2^{h_1, \G_\sigma} (y_2) = k (v_{\G_\sigma} (h_1), y_2)_1$.

With this, we can finally define the winning condition for $\G \otimes \H$: The strategy profile $(\sigma, \tau)$ is winning against $\kappa$ in $\G \otimes \H$ iff $\sigma$ is winning against $\H_\tau / \kappa$ in $\G$ and $\tau$ is winning against $\G_\sigma \setminus \kappa$ in $\H$, that is to say,
\[ |\G \otimes \H|^{\sigma, \tau}_\kappa \iff |\G|^\sigma_{\H_\tau / \kappa} \wedge |\H|^\tau_{\G_\sigma \setminus \kappa} \]

\begin{proposition}
	There is a symmetric monoidal (bi)category $\OG$ whose objects are pairs of sets and morphisms are open games.
\end{proposition}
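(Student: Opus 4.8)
The plan is to check the axioms of a symmetric monoidal bicategory directly, pushing as much as possible onto the structure of $\Lens$ already established and onto the functoriality of $\overline{\Lens}$. First I would fix the $2$-cells: declare that there is a unique invertible $2$-cell $\G \Rightarrow \G'$ between parallel open games $\G, \G' : \diset X S \to \diset Y R$ exactly when there is a bijection $\phi : \Sigma_\G \to \Sigma_{\G'}$ with $\G'_{\phi(\sigma)} = \G_\sigma$ and $|\G'|^{\phi(\sigma)}_\kappa \iff |\G|^\sigma_\kappa$ for all $\sigma$ and $\kappa$; that is, $\OG$ is assembled by reindexing the labelling and winning data along bijections of strategy-profile sets. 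Every structural $2$-cell we will need is one of the canonical bijections between iterated cartesian products, such as $(\Sigma_\G \times \Sigma_\H) \times \Sigma_\K \cong \Sigma_\G \times (\Sigma_\H \times \Sigma_\K)$ and $\{*\} \times \Sigma_\G \cong \Sigma_\G \cong \Sigma_\G \times \{*\}$, so the content at each stage is that these bijections are compatible with the labellings and winning conditions.

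Next, identities, units and associativity of $\circ$. The identity $1$-cell on $\diset X S$ is the zero-player open game on the copycat $P$-strategy $\id_{\diset X S}$; since $\overline{\Lens}(\id, \id)$ is the identity and a zero-player game is winning against every context, the recipe
\[ |\H \circ \G|^{\sigma, \tau}_\kappa \iff |\G|^\sigma_{\overline{\Lens}(\id, \H_\tau)(\kappa)} \wedge |\H|^\tau_{\overline{\Lens}(\G_\sigma, \id)(\kappa)} \]
collapses, when one factor is a zero-player game, to the winning condition of the other factor (the zero-player conjunct being vacuous), which with the unitors of $\times$ in $\Set$ furnishes invertible $2$-cells $\id \circ \G \cong \G \cong \G \circ \id$. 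For associativity, given composable $\G, \H, \K$ the labellings agree on the nose by associativity of $\circ$ in $\Lens$ and the profile sets agree up to the associator of $\times$; unfolding the recipe twice and using that $\overline{\Lens} : \Lens \times \Lens^\op \to \Set$ is a bifunctor, one finds that both bracketings reindex a context $\kappa$ to the $\G$-component along $\overline{\Lens}(\id, \K_\upsilon \circ \H_\tau)$, to the $\K$-component along $\overline{\Lens}(\H_\tau \circ \G_\sigma, \id)$, and to the $\H$-component along $\overline{\Lens}(\G_\sigma, \K_\upsilon)$, the last using that the legs $\overline{\Lens}(-, \id)$ and $\overline{\Lens}(\id, -)$ commute. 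The pentagon for the resulting associator $2$-cells then reduces to the pentagon for $\times$ in $\Set$.

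For the monoidal structure, $\otimes$ is $\diset{X_1}{S_1} \otimes \diset{X_2}{S_2} = \diset{X_1 \times X_2}{S_2 \times S_1}$ on objects and the simultaneous-play operator on $1$-cells. I would check pseudofunctoriality of $\otimes : \OG \times \OG \to \OG$, i.e. an interchange isomorphism $(\H_1 \circ \G_1) \otimes (\H_2 \circ \G_2) \cong (\H_1 \otimes \H_2) \circ (\G_1 \otimes \G_2)$: on labellings this is the interchange law in the symmetric monoidal category $\Lens$, and on winning conditions it amounts to checking that the context-projection operators $\kappa \mapsto \H_\tau / \kappa$ and $\kappa \mapsto \G_\sigma \setminus \kappa$ commute suitably with the sequential reindexings $\overline{\Lens}(\id, -)$ and $\overline{\Lens}(-, \id)$ — a finite diagram chase made routine by the explicit formulas such as $k_1^{h_2, \H_\tau}(y_1) = k(y_1, v_{\H_\tau}(h_2))_2$, since every map involved is built from the $v$'s, product projections and the braiding of $\Lens$. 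The associator, unitors and braiding of $\OG$ are then the zero-player open games on the corresponding structure isomorphisms of $\Lens$; because zero-player games are winning against every context, all of the coherence data (pentagon, hexagon, triangle, and the remaining axioms of a symmetric monoidal bicategory) reduces on labellings to the coherence already holding in $\Lens$ and on strategy-profile sets to the coherence of $(\Set, \times)$.

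The main obstacle is the winning-condition bookkeeping under associativity and, above all, interchange: one must verify that projecting a context of $\G_1 \otimes \G_2 \to \H_1 \otimes \H_2$ onto, say, the $\G_1$-component yields the same $O$-strategy whether one applies the $\circ$-projection then the $\otimes$-projection or vice versa, together with the analogous compatibilities for the other legs. These hold because every operation in sight factors through the functor $\mathbb V$, cartesian projections, and the symmetric monoidal structure of $\Lens$; the cleanest route is probably to observe that the winning condition of any composite open game is always a pullback of the winning conditions of its components along a composite of $\overline{\Lens}$-reindexings, so that functoriality of $\overline{\Lens}$ discharges the coherence more or less automatically. Turning this into a clean argument, and — more delicately — choosing the $2$-cells so that the numerous symmetric-monoidal-bicategory axioms hold rather than just the plain associativity and unit laws, is where the real work lies.
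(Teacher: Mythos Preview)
The paper does not actually prove this proposition: it simply remarks that the $2$-cells are ``appropriately compatible functions between sets of strategy profiles'', that one may quotient by them, and that the details are in \cite{hedges_morphisms_open_games}. Later, when stating the generalisation to $\OG(\C)$ for an arbitrary symmetric monoidal category with a context, it again says only that the proof ``formally follows the proof that $\OG$ is a symmetric monoidal category'' and that the axioms of a context are ``precisely the conditions needed for this proof to work''. So there is no in-paper proof to compare against; what can be compared is whether your sketch matches the structure the paper implicitly relies on.

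On that score your sketch is sound and well targeted. Your associativity calculation is exactly right: functoriality of $\overline{\Lens}$ reduces both bracketings to the three reindexings $\overline{\Lens}(\id,\K_\upsilon\circ\H_\tau)$, $\overline{\Lens}(\G_\sigma,\K_\upsilon)$, $\overline{\Lens}(\H_\tau\circ\G_\sigma,\id)$. Your identification of the crux of interchange --- that the projections $-/\kappa$ and $-\setminus\kappa$ must commute naturally with the sequential reindexings $\overline{\Lens}(\id,-)$ and $\overline{\Lens}(-,\id)$ --- is precisely what the paper later axiomatises as the commuting square in Definition~\ref{def:context}. So your diagram chase is the concrete instance of that abstract condition, and the paper's remark that those axioms are ``precisely the conditions needed'' confirms you have isolated the right lemma.

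One small imprecision: declaring ``a unique invertible $2$-cell exactly when there exists a compatible bijection'' is a hybrid that is not quite either of the two clean options. Either take the $2$-cells to \emph{be} the compatible functions (or bijections) $\phi:\Sigma_\G\to\Sigma_{\G'}$ --- there may be several between the same pair of open games --- giving a genuine bicategory; or quotient hom-sets by the equivalence relation ``there exists a compatible bijection'' and work with the resulting $1$-category, as the paper does. Your coherence arguments go through unchanged under either reading, but the phrasing should commit to one.
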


Although $\OG$ should properly be thought of as a bicategory with 2-cells given by appropriately compatible functions between sets of strategy profiles, this is an uninteresting technicality and we will instead quotient out these 2-cells, treating open games as defined only up to compatible bijections of strategy profiles.
The details of this can be found in \cite{hedges_morphisms_open_games}.

%What cannot be stated easily is that this categorical structure moreover \emph{does the right thing} in game theory.
%We can draw string diagrams in this monoidal category that reflect the information flow in a game, and interpret them in the category $\OG$, and find that the equilibrium condition is the same as the one given by a standard game-theoretic analysis.
%The reader is referred to \cite{hedges_etal_compositional_game_theory} for details, with proofs in \cite{hedges_towards_compositional_game_theory} and a formal analysis of the relevant string diagrams in \cite{hedges_coherence_lenses_open_games}.

\section{Picturing open games}\label{sec:picturing}

Since open games are the morphisms of a monoidal category, we can depict them by string diagrams, and in fact this turns out to be invaluable for working with them in practice.
As a special case of this we also obtain string diagrams for the monoidal category of $P$-strategies, which are equivalently the wide subcategory of zero-player open games.
These diagrams should not be confused with the (less well understood) diagrams for dialogues that have appeared so far in this paper, which are very different, although to some extent it is possible to translate between them.
This section contains nothing new, but is included from \cite{hedges_etal_compositional_game_theory} for completeness.

A $P$-strategy $\lambda : \diset X S \to \diset Y R$, viewed as a zero-player open game, is depicted as a string diagram
\begin{center} \begin{tikzpicture}
	\node (X) at (-2, .5) {$X$}; \node (Y) at (2, .5) {$Y$}; \node (R) at (2, -.5) {$R$}; \node (S) at (-2, -.5) {$S$};
	\node [rectangle, minimum height=1.5cm, minimum width=.75cm, draw] (G) at (0, 0) {$\lambda$};
	\draw [->-] (X) to (G.west |- X); \draw [->-] (G.east |- Y) to (Y); \draw [->-] (R) to (G.east |- R); \draw [->-] (G.west |- S) to (S);
\end{tikzpicture} \end{center}
We regard the forwards-oriented strings labelled $X$ and $Y$ as respectively representing the objects $\diset X 1$ and $\diset Y 1$, and the backwards-oriented strings labelled $R$ and $S$ are respectively representing the objects $\diset 1 R$ and $\diset 1 S$.
Thus we are implicitly using the natural isomorphisms $\diset X 1 \otimes \diset 1 S = \diset{X \times 1}{S \times 1} \cong \diset X S$ and $\diset Y 1 \otimes \diset 1 R = \diset{Y \times 1}{R \times 1} \cong \diset Y R$.

As special cases of this, a function $f : X \to Y$ can be regarded as a $P$-strategy and as a zero-player open game either covariantly as $f : \diset X 1 \to \diset Y 1$, or contravariantly as $f^* : \diset 1 Y \to \diset 1 X$.
We depict these respectively with the diagrams
\begin{center} \begin{tikzpicture}
	\node (X1) at (0, 0) {$X$}; \node (Y1) at (3, 0) {$Y$};
	\node [trapezium, trapezium left angle=0, trapezium right angle=75, shape border rotate=90, trapezium stretches=true, minimum height=.75cm, minimum width=1.5cm, draw] (f1) at (1.5, 0) {$f$};
	\draw [->-] (X1) to (f1); \draw [->-] (f1) to (Y1);
	\node (X2) at (8, 0) {$X$}; \node (Y2) at (5, 0) {$Y$};
	\node [trapezium, trapezium left angle=75, trapezium right angle=0, shape border rotate=270, trapezium stretches=true, minimum height=.75cm, minimum width=1.5cm, draw] (f2) at (6.5, 0) {$f$};
	\draw [->-] (X2) to (f2); \draw [->-] (f2) to (Y2);
\end{tikzpicture} \end{center}
As a further special case, the liftings $\Delta_X : \diset X 1 \to \diset{X \times X}{1}$ and $\Delta_X^* : \diset{1}{X \times X} \to \diset 1 X$ of the copy functions are given the special syntax
\begin{center} \begin{tikzpicture}
	\node (X2) at (6, 1) {$X$}; \node [circle, scale=0.5, fill=black, draw] (m) at (8, 1) {};
	\node (X2) at (6, 1) {$X$}; \node [circle, scale=0.5, fill=black, draw] (m) at (8, 1) {};
	\node (X3) at (10, 2) {$X$}; \node (X4) at (10, 0) {$X$};
	\draw [->-] (X2) to (m); \draw [->-] (m) to [out=45, in=180] (X3); \draw [->-] (m) to [out=-45, in=180] (X4);
\end{tikzpicture} \qquad\qquad\qquad \begin{tikzpicture}
	\node (X2) at (6, -1) {$X$}; \node (X3) at (6, -3) {$X$};
	\node [circle, scale=0.5, fill=black, draw] (m) at (8, -2) {}; \node (X4) at (10, -2) {$X$};
	\draw [->-] (X4) to (m);
	\draw [->-] (m) to [out=135, in=0] (X2); \draw [->-] (m) to [out=-135, in=0] (X3);
\end{tikzpicture} \end{center}

For any set $X$ there is a copycat $P$-strategy $\varepsilon_X : \diset X X \to I$, arising from the copycat $O$-strategy for $X^+; X^-$ via the representation $\mathbb K \cong \Lens (-, I)$.
We depict this $P$-strategy and the corresponding zero player open game by a cap
\begin{center} \begin{tikzpicture}
	\node (X1) at (0, 2) {$X$}; \node (X2) at (0, 0) {$X$};
	\draw [->-] (X1) to [out=0, in=90] (1.25, 1) to [out=-90, in=0] (X2);
\end{tikzpicture} \end{center}
However, there is no corresponding family of cups $\eta_X : I \to \diset X X$, so we do not allow wires to bend the other way in our diagrams.

The $P$-strategies $\varepsilon_X : \diset X X \to I$ are dinatural in $X$, which means that for any function $f : X \to Y$ the diagram
\[ \begin{tikzcd}
	\diset X Y \ar[rr, "f \otimes \id_{\diset 1 Y}"] \ar[dd, "\id_{\diset X 1} \otimes f^*"'] && \diset Y Y \ar[dd, "\varepsilon_Y"] \\ \\
	\diset X X \ar[rr, "\varepsilon_X"] && I
\end{tikzcd} \]
in $\Lens$ commutes.
In string diagrams, this equation is depicted
\begin{center} \begin{tikzpicture}
	\node (X1) at (0, 2) {$X$}; \node (Y1) at (0, 0) {$Y$};
	\node [trapezium, trapezium left angle=0, trapezium right angle=75, shape border rotate=90, trapezium stretches=true, minimum height=.75cm, minimum width=1.5cm, draw] (f1) at (1.5, 2) {$f$};
	\draw [->-] (X1) to (f1); \draw [->-] (f1) to [out=0, in=90] (3, 1) to [out=-90, in=0] (1.5, 0) to (Y1);
	\node at (4, 1) {$=$};
	\node (X2) at (5, 2) {$X$}; \node (Y2) at (5, 0) {$Y$};
	\node [trapezium, trapezium left angle=75, trapezium right angle=0, shape border rotate=270, trapezium stretches=true, minimum height=.75cm, minimum width=1.5cm, draw] (f2) at (6.5, 0) {$f$};
	\draw [->-] (X2) to (6.5, 2) to [out=0, in=90] (8, 1) to [out=-90, in=0] (f2); \draw [->-] (f2) to (Y2);
\end{tikzpicture} \end{center}
The reader should visualise $f$ flipping over rather than rotating within the plane.
This comes from the convention that $\otimes$ reverses the contravariant part of an object, and corresponds to the choice of algebraic rather than diagrammatic transpose in \cite[section 4.2.2]{coecke_kissinger_picturing_quantum_processes}.
%We also introduce an unbiased notation
%\begin{center} \begin{tikzpicture}
%	\node (Y2) at (4, .5) {$X$}; \node (R2) at (4, -.5) {$Y$};
%	\node [isosceles triangle, isosceles triangle apex angle=90, minimum width=2cm, draw] (D2) at (6, 0) {$f$};
%	\draw [->-] (D2.west |- Y2) to (Y2); \draw [->-] (R2) to (D2.west |- R2);
%\end{tikzpicture} \end{center}
%for either of these two equal diagrams, corresponding to the isomorphism $\hom_\Lens (\diset X Y, I) \cong Y^X$.

This can be seen as a sort of partial duality, which is defined on all objects by $\diset X S^* = \diset S X$ (which is interchange of players in a dialogue) and on $P$-strategies of the form $f$ and $f^*$, but on no other open games besides these.
In the last section of this paper we will extend this to a fully-fledged duality in the sense of compact closure.

A decision $\D_{Y | X} : \diset X 1 \to \diset Y \R$ and its special case $\D_Y = \D_{Y | 1} : I \to \diset Y \R$ are respectively depicted
\begin{center} \begin{tikzpicture}
	\node (X) at (-2, 0) {$X$}; \node (Y) at (2, .5) {$Y$}; \node (R) at (2, -.5) {$\R$};
	\node [rectangle, minimum height=1.5cm, minimum width=.75cm, draw] (G) at (0, 0) {$\D_{Y | X}$};
	\draw [->-] (X) to (G); \draw [->-] (G.east |- Y) to (Y); \draw [->-] (R) to (G.east |- R);
	\node (Y2) at (8, .5) {$Y$}; \node (R2) at (8, -.5) {$\R$};
	\node [isosceles triangle, isosceles triangle apex angle=90, shape border rotate=180, minimum width=1.5cm, draw] (D2) at (6, 0) {$\D_Y$};
	\draw [->-] (D2.east |- Y2) to (Y2); \draw [->-] (R2) to (D2.east |- R2);
\end{tikzpicture} \end{center}

The string diagrams built from these diagram elements correspond to the open games generated from zero-player open games and decisions by sequential and parallel composition.
Given a pair of payoff matrices $U : X \times Y \to \R \times \R$, the resulting bimatrix game corresponds to the diagram
	\begin{center} \begin{tikzpicture}
		\node [isosceles triangle, isosceles triangle apex angle=90, shape border rotate=180, minimum width=2cm, draw] (D1) at (0, 3) {$\D_X$};
		\node [isosceles triangle, isosceles triangle apex angle=90, shape border rotate=180, minimum width=2cm, draw] (D2) at (0, 0) {$\D_Y$};
		\node [trapezium, trapezium left angle=0, trapezium right angle=75, shape border rotate=90, trapezium stretches=true, minimum height=1cm, minimum width=2cm, draw] (U) at (3, 3) {$U$};
		\node (d1) at (0, -.5) {}; \node (d2) at (0, .5) {}; \node (d3) at (0, 2.5) {}; \node (d4) at (0, 3.5) {}; \node (d5) at (0, 1) {}; \node (d6) at (0, 2) {};
		\draw [->-] (D1.east |- d4) to node [above] {$X$} (U.west |- d4);
		\draw [->-] (D2.east |- d2) to [out=0, in=180] node [above=5pt, very near start] {$Y$} (U.west |- d3);
		\draw [->-] (U.east |- d4) to [out=0, in=90] node [above, very near start] {$\R$} (5.5, 1.5) to [out=-90, in=0] (3, -.5) to [out=180, in=0] node [below, very near end] {$\R$} (D2.east |- d1);
		\draw [->-] (U.east |- d3) to [out=0, in=90] node [above, near start] {$\R$} (4.5, 1.5) to [out=-90, in=0] (3, .5) to [out=180, in=0] node [above, very near end] {$\R$} (D1.east |- d3);
	\end{tikzpicture} \end{center}
in the sense that the scalar open game $\G : I \to I$ defined by the diagram has as strategy profiles $\Sigma_\G = X \times Y$ the pure strategy profiles of the bimatrix game, and as equilibria the pure strategy Nash equilibria of the bimatrix game: $|\G|^{x, y}$ holds iff $x \in \arg\max_{x'} U_1 (x', y)$ and $y \in \arg\max_{y'} U_2 (x, y')$.
This directly generalises to normal-form games with any finite number of players.

Similarly, the diagram
	\begin{center} \begin{tikzpicture}
		\node [isosceles triangle, isosceles triangle apex angle=90, shape border rotate=180, minimum width=2cm, draw] (D1) at (.5, 0) {$\D_X$};
		\node [circle, scale=.5, fill=black, draw] (m) at (2, .5) {};
		\node [rectangle, minimum height=2cm, draw] (D2) at (6, 0) {$\D_{Y | X'}$};
		\node [trapezium, trapezium left angle=0, trapezium right angle=75, shape border rotate=90, trapezium stretches=true, minimum height=1cm, minimum width=2cm, draw] (f) at (4, 0) {$f$};
		\node [trapezium, trapezium left angle=0, trapezium right angle=75, shape border rotate=90, trapezium stretches=true, minimum height=1cm, minimum width=2cm, draw] (q) at (8, 1) {$U$};
		\node (d1) at (0, -.5) {}; \node (d2) at (0, 1.5) {};
		\draw [->-] (D1.east |- m) to node [above] {$X$} (m);
		\draw [->-] (m) to [out=45, in=180] node [above, very near end] {$X$} (q.west |- d2);
		\draw [->-] (m) to [out=-45, in=180] node [below, near end] {$X$} (f);
		\draw [->-] (f) to node [above] {$X'$} (D2); \draw [->-] (D2.east |- m) to node [above] {$Y$} (q.west |- m);
		\draw [->-] (q.east |- d2) to [out=0, in=90] node [above, near start] {$\R$} (9.5, .5) to [out=-90, in=0] (8, -.5) to [out=180, in=0] node [below, near end] {$\R$} (D2.east |- d1);
		\draw [->-] (q.east |- m) to [out=0, in=90] node [above, near start] {$\R$} (9.5, -.5) to [out=-90, in=0] (8, -1.5) to (4, -1.5) to [out=180, in=0] node [below, very near end] {$\R$} (D1.east |- d1);
	\end{tikzpicture} \end{center}
describes a 2-player sequential game in which the first player chooses $x$ and then the second player chooses $y$ after observing $f (x)$ for some function $f : X \to X'$, which is equivalently an extensive form with player 2's information sets given by the equivalence relation on $X'$ induced by $f$.
As special cases, if $f$ is the identity function then the $f$ node can be drawn as a plain wire and we obtain a game of perfect information, and if $f : X \to 1$ is the delete function then $f$ cancels with the copy function and the diagram can be deformed into the previous one to obtain a bimatrix game.
The scalar game $\G : I \to I$ defined by the diagram has $\Sigma_\G = X \times Y^{X'}$ given by the pure strategy profiles, and $|\G|^{x, f}$ holds iff $x \in \arg\max_{x'} U_1 (x', f (x'))$ and $f (x) \in \arg\max_{y'} U_2 (x, y')$.
Notice that these are the Nash equilibria of the extensive form game, rather than the subgame perfect equilibria.
Again, this generalises to extensive form games with any finite number of players.

\section{Dialogues and wave-style geometry of interaction}

In order to obtain a connection between the dialectica and $\Int$ constructions, we need to apply the $\Int$ construction to categories that are traced cartesian monoidal.
This is \emph{wave-style geometry of interaction}, so-called because every point in our string diagrams is consistently assigned a value \cite{abramsky_retracing_paths_process_algebra}.
(It is contrasted with \emph{particle-style GoI}, which applies to monoidal categories built on a coproduct and in which we imagine a token moving around the diagram.)

Game-semantic interpretations of wave-style GoI have not been widely considered.
In this section we suggest such an interpretation that will be suitable for our purposes.

The $\Int$-construction can be defined over any traced monoidal category $\C$, but we restrict to traced cartesian categories.
These are equivalent to \emph{Conway cartesian categories}, or cartesian categories with a natural family of fixpoint operators \cite{hawegawa_recursion_cyclic_sharing} (see also \cite{ponto_shulman_traces_symmetric_monoidal_categories}).
A canonical example is the category $\DCPO$ of directed-complete partial orders and Scott-continuous maps.

By definition, an object of the category $\Int (\C)$ is a pair $\diset X S$ of objects of $\C$, and a morphism $\diset X S \to \diset Y R$ in $\Int (\C)$ is a morphism $X \times R \to Y \times S$ in $\C$.
Since $\C$ is cartesian monoidal, a morphism $\diset X S \to \diset Y R$ is equivalently a pair of morphisms $X \times R \to Y$ and $X \times R \to S$.

The identity on $\diset X S$ in $\Int (\C)$ is the identity on $X \times S$ in $\C$.
The composition of $\lambda : \diset X S \to \diset Y R$ and $\mu : \diset Y R \to \diset Z Q$ in $\Int (\C)$ is given by
\begin{center} \begin{tikzpicture}
	\node (X) at (0, .5) {$X$}; \node (Z) at (11, .5) {$Z$};
	\node (Q) at (0, -1.5) {$Q$}; \node (S) at (11, -1.5) {$S$};
	\node (d) at (0, -.5) {};
	\node (f) [rectangle, minimum height=2cm, minimum width=1cm, draw] at (3, 0) {$\lambda$};
	\node (g) [rectangle, minimum height=2cm, minimum width=1cm, draw] at (8, 0) {$\mu$};
	\draw [-] (X) to (f.west |- X); \draw [-] (f.east |- X) to node [above] {$Y$} (g.west |- Z); \draw [-] (g.east |- Z) to (Z);
	\draw [-] (Q) to (3, -1.5) to [out=0, in=180] node [above, very near end] {$Q$} (g.west |- d);
	\draw [-] (f.east |- d) to [out=0, in=180] node [above, very near start] {$S$} (8, -1.5) to (S);
	\draw [-] (g.east |- d) to [out=0, in=90] node [above, very near start] {$R$} (10, -2) to [out=-90, in=0] (8, -3) to (3, -3) to [out=180, in=-90] (1, -2) to [out=90, in=180] node [above, very near end] {$R$} (f.west |- d);
\end{tikzpicture} \end{center}
in $\C$, using the string diagram language for traced monoidal categories \cite[section 5.7]{selinger11}.

The monoidal product of $\Int (\C)$ is defined on objects by $\diset{X_1}{S_1} \otimes \diset{X_2}{S_2} = \diset{X_1 \otimes X_2}{S_2 \otimes S_1}$, with the obvious definition on morphisms.
As is well known, $\Int (\C)$ can be equipped with the structure of a compact closed category, which satisfies the universal property of being the free compact closed category on the traced monoidal category $\C$.
Note that there are two different conventions in use: we follow \cite{joyal96}, which defines $\otimes$ with a twist in the contravariant place, rather than \cite{abramsky_retracing_paths_process_algebra} which does not.

The idea of interpreting objects and morphisms of $\Int (\C)$ as dialogues is to view them as repeated play of the corresponding dialogues for $\Lens (\C)$, starting from $\bot$ and converging to a fixpoint, after which the play terminates and all moves except the final ones are hidden.

We view the object $\diset X S$ as a dialogue
\[ X^+; S^-; X^+; S^-; \cdots \]
We do not allow arbitrary strategies, but restrict the allowed $P$-strategies to $\C$-morphisms $S \to X$, and the allowed $O$-strategies to the $\C$-morphisms $X \to S$.
Given such a pair of strategies $(h, k)$, the play that results is by definition
\[ \bot_X; \bot_S; h (\bot_S); k (\bot_X); h (k (\bot_X)); k (h (\bot_S)); \cdots \]
When $\C$ is $\DCPO$ or another suitable category, this play stabilises after finitely many stages to the $(x, s)$ that is the least fixpoint of the recursion $x = h (s)$, $s = k (x)$.
By hiding the approximating moves, we consider the play resulting from $(h, k)$ to be $(x, s)$.

%We can succinctly represent this repetition and hiding by extending our string diagram language for strategies with a trace.

Given objects $\diset X S$ and $\diset Y R$, the dialogue $\diset X S \to \diset Y R$ is
\[ X^-; Y^+; R^-; S^+; X^-; Y^+; R^-; S^+; \cdots \]
We restrict the allowed $P$-strategies to $\C$-morphisms $X \times R \to Y \times S$ and the allowed $O$-strategies to $\C$-morphisms $Y \times S \to X \times R$.
Given a $P$-strategy $\lambda = \left< v, u \right>$ and an $O$-strategy $\kappa = \left< h, k \right>$, the resulting play is
\begin{align*}
	x_0 &= \bot_X; & y_0 &= \bot_Y; & r_0 &= \bot_R; & s_0 &= u (x_0, r_0); \\
	x_{n + 1} &= h (y_n, s_n); & y_{n + 1} &= v (x_{n + 1}, r_n); & r_{n + 1} &= k (y_{n + 1}, s_n); & s_{n + 1} &= u (x_{n + 1}, r_{n + 1})
\end{align*}
This stabilises after finitely many stages to $(x, y, r, s)$ which is the least fixpoint of $(x, r) = \kappa (y, s)$, $(y, s) = \lambda (x, r)$.
Again we hide the approximating moves so that $(x, y, r, s)$ is the visible play.

\section{From dialectica to geometry of interaction}

The previous section suggests that every $P$-strategy in $\Lens (\C) \left( \diset X S, \diset Y R \right)$ can also be viewed as a $P$-strategy in $\Int (\C) \left( \diset X S, \diset Y R \right)$.
We could also discover this fact simply by inspecting the definitions, without thinking in terms of dialogues.

Incidentally, \cite{hasuo_hoshino_semantics_higher_order_quantum_computation_geometry_interaction} refers to the Int-construction as ``bidirectional computation'', a technical term that usually refers to lenses and related constructions (e.g. \cite{gibbons_stevens_bidirectional_transformations}).

In this section we use string diagrams in the underlying category $\C$.
This is the language of traced symmetric monoidal categories \cite[section 5.7]{selinger11} which are cartesian monoidal \cite[section 6.1]{selinger11}.
Implicitly, string diagrams for cartesian monoidal categories use the fact that a monoidal product is cartesian iff every object can be compatibly equipped with a commutative comonoid structure making every morphism into a comonoid homomorphism \cite{fox_coalgebras_cartesian_categories}.

\begin{proposition}
	Let $\C$ be a traced cartesian category.
	Then there is a strict monoidal functor $-^* : \Lens (\C) \to \Int (\C)$, which is identity on objects and takes the strategy $(v, u)$ to
	\begin{center} \begin{tikzpicture}
		\node (X) at (0, 2) {$X$}; \node (Y) at (6, 2) {$Y$}; \node (R) at (0, 0) {$R$}; \node (S) at (6, 0) {$S$};
		\node (v) [rectangle, minimum height=1.5cm, minimum width=.75cm, draw] at (4, 2) {$v$};
		\node (u) [rectangle, minimum height=1.5cm, minimum width=.75cm, draw] at (4, 0) {$u$};
		\node (copy) [circle, scale=.5, fill=black, draw] at (2, 1.25) {};
		\node (dummy1) at (0, .5) {}; \node (dummy2) at (0, -.5) {};
		\draw [-] (X) to [out=0, in=180] (copy); \draw [-] (copy) to [out=45, in=180] (v); \draw [-] (v) to (Y);
		\draw [-] (copy) to [out=-45, in=180] (u.west |- dummy1); \draw [-] (R) to [out=0, in=180] (u.west |- dummy2); \draw [-] (u) to (S);
	\end{tikzpicture} \end{center}
\end{proposition}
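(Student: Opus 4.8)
The plan is to unwind the string-diagram definition into an explicit $\C$-formula, observe that $-^*$ is then well defined with no coherence obstruction, and verify functoriality and strict monoidality by direct computation in $\C$; the only step with genuine content is composition, where I would show that the feedback loop of $\Int(\C)$-composition degenerates on morphisms in the image of $-^*$.

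\emph{The functor.} Both $\Lens(\C)$ and $\Int(\C)$ have as objects the pairs $\diset X S$ of objects of $\C$, and $-^*$ is the identity there, so the content is the action on hom-sets. Reading off the displayed diagram, $-^*$ sends $(v,u)\in\Lens(\C)(\diset X S,\diset Y R)=\C(X,Y)\times\C(X\times R,S)$ to $\langle v\circ\pi_X,\,u\rangle:X\times R\to Y\times S$, where $\pi_X:X\times R\to X$ is the projection; equivalently, on hom-sets $-^*$ is the injection induced by precomposition with $\pi_X$, namely $\C(X,Y)\hookrightarrow\C(X\times R,Y)$. This is manifestly a well-defined function $\Lens(\C)(\diset X S,\diset Y R)\to\C(X\times R,Y\times S)=\Int(\C)(\diset X S,\diset Y R)$, so I would spend no effort on the definition itself and move directly to the three checks.

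\emph{Identities and composition.} The identity of $\diset X S$ in $\Lens(\C)$ is the copycat $(\id_X,\pi_S)$ with $\pi_S:X\times S\to S$; applying $-^*$ copies $X$, runs one copy through $\id_X$ and the other, with $S$, through $\pi_S$, and pairs the outputs, giving $\langle\pi_X,\pi_S\rangle=\id_{X\times S}$ by the counit law for the copy comonoid --- and $\id_{X\times S}$ is exactly the identity of $\diset X S$ in $\Int(\C)$. For composition, given $\lambda=(v_\lambda,u_\lambda):\diset X S\to\diset Y R$ and $\mu=(v_\mu,u_\mu):\diset Y R\to\diset Z Q$ one has $v_{\mu\circ\lambda}=v_\mu\circ v_\lambda$ and $u_{\mu\circ\lambda}(x,q)=u_\lambda(x,u_\mu(v_\lambda(x),q))$, so $-^*(\mu\circ\lambda)$ is the corresponding $\C$-morphism $X\times Q\to Z\times S$. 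On the other side, $-^*(\mu)\circ-^*(\lambda)$ is computed by the trace diagram for $\Int(\C)$-composition, feeding the $Y$-output of $-^*(\lambda)$ into $-^*(\mu)$ and tracing the $R$-wire back. The key point --- and the one I expect to be the main obstacle --- is that the $Y$-component of $-^*(\lambda)$ is $v_\lambda\circ\pi_X$, which is independent of the traced wire $R$; so in the trace diagram the $v_\lambda$-box slides out of the loop, after which the $R$-wire is produced by $u_\mu$ and consumed only by $u_\lambda$ with no genuine cycle, and the trace collapses to ordinary composition by the sliding and vanishing axioms of a traced cartesian category (equivalently, one unfolding of the canonical fixpoint already stabilises). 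Reading off the two components then gives exactly $-^*(\mu\circ\lambda)$. I would present this as a short string-diagram calculation in $\C$, using the traced-monoidal axioms together with Fox's theorem (naturality of copy and delete, as recalled before the statement); the care needed is purely in fixing the orientation of the trace and tracking which copy of $X$ goes where, since conceptually the assertion is just that the backward channel never feeds forward.

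\emph{Strict monoidality.} On objects, $\Lens(\C)$ and $\Int(\C)$ carry literally the same tensor $\diset{X_1}{S_1}\otimes\diset{X_2}{S_2}=\diset{X_1\otimes X_2}{S_2\otimes S_1}$ and the same unit $\diset 1 1$, and since $-^*$ is the identity on objects it preserves these on the nose; the unique scalar $I\to I$ is preserved trivially. On morphisms, expanding $-^*(\lambda\otimes\mu)$ and $-^*(\lambda)\otimes-^*(\mu)$ as diagrams in $\C$ exhibits both as two juxtaposed copies of the basic gadget (copy $X$, then $v$ on one branch and $u$ on the other) decorated with the symmetry swaps on the $R_i$ and $S_i$, and these coincide after routine reindexing using coherence of the symmetry and the comonoid laws. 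Finally the associators, unitors and symmetry of $\Lens(\C)$ are assembled from structural isomorphisms $f$ of $\C$ and their transposes $f^*$, on which $-^*$ returns the matching structural isomorphisms of $\Int(\C)$, so the coherence squares commute strictly. Assembling these, $-^*$ is a strict monoidal (indeed symmetric monoidal) functor.
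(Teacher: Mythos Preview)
Your proposal is correct and follows essentially the same route as the paper: identities via the comonoid counit, composition via a string-diagram calculation in $\C$ using Fox's theorem (the paper phrases this as ``$v_\lambda$ is a comonoid homomorphism'' together with coassociativity/symmetry, then invokes coherence for traced symmetric monoidal categories), and strict monoidality by inspection since the tensor is defined identically on both sides. Your observation that the $Y$-output of $\lambda^*$ is independent of the traced $R$-wire, so the feedback loop degenerates, is exactly the content of the paper's diagrammatic rewrite that duplicates $v_\lambda$ and then matches the result against $(\mu\circ\lambda)^*$.
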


\begin{proof}
	The identity morphism $\diset X S \to \diset X S$ of $\Lens (\C)$ is sent to
	\begin{center} \begin{tikzpicture}
		\node (X) at (0, 2) {$X$}; \node (Y) at (6, 2) {$X$}; \node (R) at (0, 0) {$S$}; \node (S) at (6, 0) {$S$};
		\node (copy) [circle, scale=.5, fill=black, draw] at (2, 1.25) {};
		\node (delete) [circle, scale=.5, fill=black, draw] at (4, .5) {};
		\node (dummy1) at (0, .5) {}; \node (dummy2) at (0, -.5) {};
		\draw [-] (X) to [out=0, in=180] (copy); \draw [-] (copy) to [out=45, in=180] (4, 2) to (Y);
		\draw [-] (copy) to [out=-45, in=180] (delete); \draw [-] (R) to [out=0, in=180] (4, -.5) to [out=0, in=180] (S);
	\end{tikzpicture} \end{center}
	which is equal to the identity on $X \times S$ since the black structure is a comonoid.
	This is the identity morphism $\diset X S \to \diset XS$ of $\Int (\C)$.
	
	Next, consider morphisms $\lambda : \diset X S \to \diset Y R$ and $\mu : \diset Y R \to \diset Z Q$ in $\Lens (\C)$.
	If we compose them in $\Int (\C)$ we obtain the morphism $\mu^* \circ \lambda^*$ with string diagram
	\begin{center} \begin{tikzpicture}
		\node (X) at (0, 2) {$X$}; \node (Q) at (0, 0) {$Q$}; \node (Z) at (11, 2) {$Z$}; \node (S) at (11, 0) {$S$};
		\node (copy1) [circle, scale=.5, fill=black, draw] at (2, 1.25) {};
		\node (vl) [rectangle, minimum height=1.5cm, minimum width=.75cm, draw] at (4, 2) {$v_\lambda$};
		\node (ul) [rectangle, minimum height=1.5cm, minimum width=.75cm, draw] at (4, 0) {$u_\lambda$};
		\node (copy2) [circle, scale=.5, fill=black, draw] at (6, 1.25) {};
		\node (vm) [rectangle, minimum height=1.5cm, minimum width=.75cm, draw] at (8, 2) {$v_\mu$};
		\node (um) [rectangle, minimum height=1.5cm, minimum width=.75cm, draw] at (8, 0) {$u_\mu$};
		\node (dummy1) at (0, .5) {}; \node (dummy2) at (0, -.5) {};
		\draw [-] (X) to [out=0, in=180] (copy1); \draw [-] (copy1) to [out=45, in=180] (vl); \draw [-] (vl) to [out=0, in=180] (copy2);
		\draw [-] (copy2) to [out=45, in=180] (vm); \draw [-] (vm) to (Z);
		\draw [-] (copy1) to [out=-45, in=180] (ul.west |- dummy1); \draw [-] (copy2) to [out=-45, in=180] (um.west |- dummy1);
		\draw [-] (Q) to [out=0, in=180] (4, -1) to [out=0, in=180] (um.west |- dummy2);
		\draw [-] (ul) to [out=0, in=180] (8, -1) to [out=0, in=180] (S);
		\draw [-] (um) to [out=0, in=90] (10, -1) to [out=-90, in=0] (8, -2) to (4, -2) to [out=180, in=-90] (2, -1) to [out=90, in=180] (ul.west |- dummy2);
	\end{tikzpicture} \end{center}
	Using the fact that $v_\lambda$ is a comonoid homomorphism, followed by coassociativity and symmetry of the black structure, we can transform this to
	\begin{center} \begin{tikzpicture}
		\node (X) at (0, 2) {$X$}; \node (Q) at (0, 0) {$Q$}; \node (Z) at (11, 2) {$Z$}; \node (S) at (11, 0) {$S$};
		\node (copy1) [circle, scale=.5, fill=black, draw] at (1.5, 2) {};
		\node (vl1) [rectangle, minimum height=1.5cm, minimum width=.75cm, draw] at (6, 2.75) {$v_\lambda$};
		\node (vl2) [rectangle, minimum height=1.5cm, minimum width=.75cm, draw] at (6, 1) {$v_\lambda$};
		\node (ul) [rectangle, minimum height=1.5cm, minimum width=.75cm, draw] at (4.5, 0) {$u_\lambda$};
		\node (copy2) [circle, scale=.5, fill=black, draw] at (2.5, 1.5) {};
		\node (vm) [rectangle, minimum height=1.5cm, minimum width=.75cm, draw] at (8, 2.75) {$v_\mu$};
		\node (um) [rectangle, minimum height=1.5cm, minimum width=.75cm, draw] at (8, 0) {$u_\mu$};
		\node (dummy1) at (0, .5) {}; \node (dummy2) at (0, -.5) {};
		\draw [-] (X) to [out=0, in=180] (copy1); \draw [-] (copy1) to [out=45, in=180] (vl1); \draw [-] (vl1) to (vm);
		\draw [-] (vm) to [out=0, in=180] (Z);
		\draw [-] (copy1) to [out=-45, in=180] (copy2); \draw [-] (copy2) to [out=-45, in=180] (vl2);
		\draw [-] (copy2) to [out=45, in=180] (ul.west |- dummy1); \draw [-] (vl2) to [out=0, in=180] (um.west |- dummy1);
		\draw [-] (Q) to [out=0, in=180] (4, -1) to [out=0, in=180] (um.west |- dummy2);
		\draw [-] (ul) to [out=0, in=180] (8, -1) to [out=0, in=180] (S);
		\draw [-] (um) to [out=0, in=90] (10, -1) to [out=-90, in=0] (8, -2) to (4, -2) to [out=180, in=-90] (2, -1) to [out=90, in=180] (ul.west |- dummy2);
	\end{tikzpicture} \end{center}
	On the other hand, if we compose in $\Lens (\C)$, we obtain $(\mu \circ \lambda)^*$ with string diagram
	\begin{center} \begin{tikzpicture}
		\node (X) at (0, 2) {$X$}; \node (Q) at (0, 0) {$Q$}; \node (Z) at (11, 2) {$Z$}; \node (S) at (11, 0) {$S$};
		\node (copy1) [circle, scale=.5, fill=black, draw] at (2, 1.5) {};
		\node (vl1) [rectangle, minimum height=1.5cm, minimum width=.75cm, draw] at (6, 2.75) {$v_\lambda$};
		\node (vm) [rectangle, minimum height=1.5cm, minimum width=.75cm, draw] at (8, 2.75) {$v_\mu$};
		\node (copy2) [circle, scale=.5, fill=black, draw] at (3.5, .5) {};
		\node (vl2) [rectangle, minimum height=1.5cm, minimum width=.75cm, draw] at (5, 0) {$v_\lambda$};
		\node (um) [rectangle, minimum height=1.5cm, minimum width=.75cm, draw] at (7, -.5) {$u_\mu$};
		\node (ul) [rectangle, minimum height=1.5cm, minimum width=.75cm, draw] at (9, 1) {$u_\lambda$};
		\node (dummy) at (0, -1) {};
		\draw [-] (X) to [out=0, in=180] (copy1);
		\draw [-] (copy1) to [out=45, in=180] (vl1); \draw [-] (vl1) to (vm); \draw [-] (vm) to [out=0, in=180] (Z);
		\draw [-] (copy1) to [out=-45, in=180] (copy2);
		\draw [-] (copy2) to [out=45, in=180] (ul.west |- copy1);
		\draw [-] (copy2) to [out=-45, in=180] (vl2); \draw [-] (vl2) to (um.west |- vl2);
		\draw [-] (Q) to [out=0, in=180] (5, -1) to (um.west |- dummy);
		\draw [-] (um) to [out=0, in=180] (ul.west |- copy2);
		\draw [-] (ul) to [out=0, in=180] (S);
	\end{tikzpicture} \end{center}
	By inspection, we see that these string diagrams are equivalent.
	Equality of the two morphisms then follows from the coherence theorem for traced symmetric monoidal categories \cite[theorem 5.22]{selinger11}.
	
	Finally, it can be seen by inspection that the functor is strict monoidal, since $\Lens (\C)$ and $\Int (\C)$ have the same objects and the monoidal product is defined in the same way.
\end{proof}

The previous result is still true when $\C$ is an arbitrary traced monoidal category, where $\Lens (\C)$ is replaced with the more general category of optics \cite{riley_categories_of_optics}.
This was proved by Elena Di Lavore and Mario Rom\'an (private communication).

%The author conjectures that when $\C$ is only a monoidal category there is still an identity-on-objects strict monoidal category $\Lens (\C) \to \Int (\C)$, where $\Lens (\C)$ is now taken to be the more general category of optics \cite{riley_categories_of_optics}.

We also note that the functor $-^*$ takes the $P$-strategy $\varepsilon_X : \diset X X \to I$ to the morphism $\varepsilon_X : \diset X X \to I$ that is the counit of the compact closed structure of $\Int (\C)$.

\section{Abstracting open games}

Inspecting the definition of open games, it appears that we can define open games replacing $\Lens$ with any symmetric monoidal category $\C$ with a chosen functor $\overline\C : \C \times \C^\op \to \Set$.
This does indeed give us a \emph{category} of open games, but defining a monoidal product of open games requires an additional piece of structure, namely the ability to project individual $P$-strategies out of a $O$-strategy for a composite.
This is axiomatised by the following definition.

\begin{definition}\label{def:context}
	A \emph{context} for a symmetric monoidal category $\C$ is a symmetric monoidal functor $\overline\C : \C \times \C^\op \to \Set$ together with a natural family of functions
	\[ / : \hom_\C (X_2, Y_2) \to \left( \overline\C (X_1 \otimes X_2, Y_1 \otimes Y_2) \to \overline\C (X_1, Y_1) \right) \]
	The naturality condition required is that for all morphisms $W_1 \overset{\lambda_1}\longrightarrow X_1$, $Y_2 \overset{\nu_1}\longrightarrow Z_1$ and $W_2 \overset{\lambda_2}\longrightarrow X_2 \overset{\mu_2}\longrightarrow Y_2 \overset{\nu_2}\longrightarrow Z_2$, the diagram
	\begin{center} \begin{tikzpicture}
		\node (A) at (0, 3) {$\overline\C (Z_1 \otimes Z_2, W_1 \otimes W_2)$}; \node (B) at (7, 3) {$\overline\C (Z_1, W_1)$};
		\node (C) at (0, 0) {$\overline\C (Y_1 \otimes Y_2, X_1 \otimes X_2)$}; \node (D) at (7, 0) {$\overline\C (Y_1, X_1)$};
		\draw [->] (A) to node [above] {$(\nu_2 \circ \mu_2 \circ \lambda_2) / -$} (B); \draw [->] (B) to node [right] {$\overline\C (\nu_1, \lambda_1)$} (D);
		\draw [->] (A) to node [right] {$\overline\C (\nu_1 \otimes \nu_2, \lambda_1 \otimes \lambda_2)$} (C); \draw [->] (C) to node [above] {$\mu_2 / -$} (D);
	\end{tikzpicture} \end{center}
	in $\Set$ commutes.
\end{definition}

Using the symmetry, we can derive from this a natural family of functions
\[ \setminus : \hom_\C (X_1, Y_1) \to \left( \overline\C (X_1 \otimes X_2, Y_1 \otimes Y_2) \to \overline\C (X_2, Y_2) \right) \]
and vice versa.

The structures we defined earlier do indeed give a context on $\Lens (\C)$, namely
\[ \overline{\Lens (\C)} \left( \diset X S, \diset Y R \right) = \hom_\C (1, X) \times \hom_\C (Y, R) \]

There are trivial examples of contexts that carry no game-theoretic information, which we will ignore.
For example, we can always take $\overline\C$ to be a constant functor.
We give a second family of nontrivial examples, which we will use later.

\begin{proposition}
	Every traced symmetric monoidal category $\C$ can be equipped with the context $\overline\C (X, Y) = \hom_\C (Y, X)$, with $\lambda / \kappa$ defined by
	\begin{center} \begin{tikzpicture}
		\node (Y) at (-2, .5) {$Y_1$}; \node (X) at (2, .5) {$X_1$};
		\node (c) [rectangle, minimum height=1.5cm, minimum width=.75cm, draw] at (0, 0) {$\kappa$};
		\node (f) [rectangle, minimum height=1.5cm, minimum width=.75cm, draw] at (0, -3) {$\lambda$};
		\node (dummy) at (0, -.5) {};
		\draw [-] (Y) to (c.west |- Y); \draw [-] (c.east |- X) to (X);
		\draw [-] (c.east |- dummy) to [out=0, in=90] node [above, near start] {$X_2$} (1, -1) to [out=-90, in=0] (0, -1.5) to [out=180, in=90] (-1, -2.25) to [out=-90, in=180] node [below, near end] {$X_2$} (f);
		\draw [-] (f) to [out=0, in=-90] node [below, near start] {$Y_2$} (1, -2.25) to [out=90, in=0] (0, -1.5) to [out=180, in=-90] (-1, -1) to [out=90, in=180] node [above, near end] {$Y_2$} (c.west |- dummy);
	\end{tikzpicture} \end{center}
\end{proposition}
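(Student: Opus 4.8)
The plan is to check the two clauses of Definition~\ref{def:context}: that $\overline\C(X,Y)=\hom_\C(Y,X)$ is the object part of a lax symmetric monoidal functor $\C\times\C^\op\to\Set$, and that the family $/$ exhibited by the string diagram is natural in the sense of the displayed square.

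For the functor clause, note that $\overline\C$ factors as $\C\times\C^\op\xrightarrow{\ \sim\ }\C^\op\times\C\xrightarrow{\ \hom_\C\ }\Set$, where the first arrow is the strict, symmetry-preserving swap isomorphism of categories. It therefore suffices to recall the standard fact that the hom-functor of any symmetric monoidal category is lax symmetric monoidal, with structure maps $\hom_\C(A,B)\times\hom_\C(A',B')\to\hom_\C(A\otimes A',B\otimes B')$, $(f,g)\mapsto f\otimes g$, and $1\to\hom_\C(I,I)$ picking out $\id_I$; the associativity, unit and symmetry coherences for these are precisely instances of those for $\C$, by bifunctoriality of $\otimes$. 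This uses nothing about the trace and can be dispatched quickly.

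For the $/$ clause I would first make the string diagram explicit as a traced-monoidal expression: it reads $\lambda/\kappa=\mathrm{Tr}^{Y_2}_{Y_1,X_1}\!\bigl((\id_{X_1}\otimes\lambda)\circ\kappa\bigr)$, where $\kappa\colon Y_1\otimes Y_2\to X_1\otimes X_2$ is the element of $\overline\C(X_1\otimes X_2,Y_1\otimes Y_2)=\hom_\C(Y_1\otimes Y_2,X_1\otimes X_2)$ and $\lambda\colon X_2\to Y_2$, so that $\lambda/\kappa\colon Y_1\to X_1$ lands in $\overline\C(X_1,Y_1)$ as demanded, and $\setminus$ is the conjugate of $/$ by the symmetry of $\C$. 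Unfolding the action of $\overline\C$ on morphisms (pre- and post-composition in $\C$) and substituting this formula along each leg, the naturality square becomes an equation between two morphisms of $\C$ built from a generic $\kappa$, the morphisms $\lambda_1,\lambda_2,\mu_2,\nu_1,\nu_2$, the monoidal product, composition, and one trace. I would verify this equation graphically: naturality (``tightening'') of the trace in its untraced strands lets one pull $\lambda_1$ and $\nu_1$ outside the trace, while the sliding (dinaturality) axiom lets one transport $\lambda_2$ and $\nu_2$ around the feedback loop until they absorb into $\mu_2$ to form $\nu_2\circ\mu_2\circ\lambda_2$; bifunctoriality of $\otimes$ keeps the index-$1$ and index-$2$ strands separated throughout. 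The two diagrams then coincide, and equality of the denoted morphisms follows from the coherence theorem for traced symmetric monoidal categories \cite[theorem 5.22]{selinger11}.

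The part I expect to be fiddly is the bookkeeping rather than the geometry: keeping the one variance subtlety — the contravariant slot of $\overline\C$ — straight, so that each box lands on the correct side of the trace with the correct orientation, and so that the exact trace axiom (tightening, sliding, or superposing) needed is the one invoked at each step. Conceptually there is nothing deeper: the naturality of $/$ is precisely the statement that ``slide a box around the feedback loop'' and ``pull a box off the loop'' are legal graphical moves, which is exactly what the trace axioms guarantee.
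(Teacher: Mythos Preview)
Your proposal is correct and follows essentially the same approach as the paper: chase a generic context $\kappa$ around both legs of the naturality square, draw the resulting string diagrams in $\C$, and conclude they are equal by the coherence theorem for traced symmetric monoidal categories \cite[theorem 5.22]{selinger11}. You are slightly more explicit than the paper in two respects --- you actually address the lax-monoidal-functor clause for $\overline\C$, which the paper takes for granted, and you name the specific trace axioms (tightening for $\lambda_1,\nu_1$, sliding for $\lambda_2,\nu_2$) rather than simply appealing to coherence --- but neither changes the substance of the argument.
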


\begin{proof}
	Let $\kappa : \hom_\C (Z \otimes Z', W \otimes W')$, $\lambda_1 : W_1 \to X_1$, $\nu_1 : Y_2 \to Z_1$ and $W_2 \overset{\lambda_2}\longrightarrow X_2 \overset{\mu_2}\longrightarrow Y_2 \overset{\nu_2}\longrightarrow Z_2$.
	We chase\footnote{The author has named this proof technique `string diagram chasing', i.e. chasing an element around a commuting diagram whose nodes are all formed from homsets in a monoidal category.} the context $\kappa$ around the commuting diagram in definition \ref{def:context}.
	By the upper route we obtain
	\begin{center} \begin{tikzpicture}
		\node (Y) at (-6, 3.5) {$Y$}; \node (X) at (6, 3.5) {$X$};
		\node (c) [rectangle, minimum height=1.5cm, minimum width=.75cm, draw] at (0, 3) {$\kappa$};
		\node (h) [rectangle, minimum height=1.5cm, minimum width=.75cm, draw] at (-4, 3.5) {$\nu_1$};
		\node (f) [rectangle, minimum height=1.5cm, minimum width=.75cm, draw] at (4, 3.5) {$\lambda_1$};
		\node (f') [rectangle, minimum height=1.5cm, minimum width=.75cm, draw] at (-2, 0) {$\lambda_2$};
		\node (g') [rectangle, minimum height=1.5cm, minimum width=.75cm, draw] at (0, 0) {$\mu_2$};
		\node (h') [rectangle, minimum height=1.5cm, minimum width=.75cm, draw] at (2, 0) {$\nu_2$};
		\node (dummy) at (0, 2.5) {};
		\draw [-] (Y) to (h); \draw [-] (h) to (c.west |- h); \draw [-] (c.east |- f) to (f);  \draw [-] (f) to (X);
		\draw [-] (c.east |- dummy) to [out=0, in=90] (1, 2) to [out=-90, in=90] (-3, .5) to [out=-90, in=180] (f');
		\draw [-] (c.west |- dummy) to [out=180, in=90] (-1, 2) to [out=-90, in=90] (3, .5) to [out=-90, in=0] (h');
		\draw [-] (f') to (g'); \draw [-] (g') to (h');
	\end{tikzpicture} \end{center}
	and by the lower route we obtain
	\begin{center} \begin{tikzpicture}
		\node (Y) at (-4, 4) {$Y$}; \node (X) at (4, 4) {$X$};
		\node (c) [rectangle, minimum height=1.5cm, minimum width=.75cm, draw] at (0, 3) {$\kappa$};
		\node (h) [rectangle, minimum height=1.5cm, minimum width=.75cm, draw] at (-2, 4) {$\nu_1$};
		\node (h') [rectangle, minimum height=1.5cm, minimum width=.75cm, draw] at (-2, 2) {$\nu_2$};
		\node (f) [rectangle, minimum height=1.5cm, minimum width=.75cm, draw] at (2, 4) {$\lambda_1$};
		\node (f') [rectangle, minimum height=1.5cm, minimum width=.75cm, draw] at (2, 2) {$\lambda_2$};
		\node (g') [rectangle, minimum height=1.5cm, minimum width=.75cm, draw] at (0, -1) {$\mu_2$};
		\node (dummy1) at (0, 3.5) {}; \node (dummy2) at (0, 2.5) {};
		\draw [-] (Y) to (h); \draw [-] (h) to [out=0, in=180] (c.west |- dummy1); \draw [-] (c.east |- dummy1) to [out=0, in=180] (f); \draw [-] (f) to (X);
		\draw [-] (h') to [out=0, in=180] (c.west |- dummy2); \draw [-] (c.east |- dummy2) to [out=0, in=180] (f');
		\draw [-] (f') to [out=0, in=90] (3, 1.5) to [out=-90, in=90] (-1, -.5) to [out=-90, in=180] (g');
		\draw [-] (g') to [out=0, in=-90] (1, -.5) to [out=90, in=-90] (-3, 1.5) to [out=90, in=180] (h');
	\end{tikzpicture} \end{center}
	By the coherence theorem for traced monoidal categories, these denote equal morphisms.
\end{proof}

\begin{definition}
	Let $\C$ be a symmetric monoidal category with a context $\overline\C$, and let $X, Y$ be objects of $\C$.
	An \emph{open game} $\G : X \to Y$ over $\C$ consists of
	\begin{itemize}
		\item A set $\Sigma_\G$ of strategy profiles
		\item A labelling function $\G_- : \Sigma_\G \to \C (X, Y)$
		\item A winning condition $|\G| \subseteq \Sigma_\G \times \overline\C (X, Y)$
	\end{itemize}
\end{definition}

Given open games $\G : X \to Y$ and $\H : Y \to Z$ over $\C$, their sequential composition $\H \circ \G : X \to Z$ is defined by $\Sigma_{\H \circ \G} = \Sigma_\G \times \Sigma_\H$, $(\H \circ \G)_{\sigma, \tau} = \H_\tau \circ \G_\sigma$ and
\[ |\H \circ \G|^{\sigma, \tau}_\kappa \iff |\G|^\sigma_{\overline\C (\id_X, \H_\tau) (\kappa)} \wedge |\H|^\tau_{\overline\C (\G_\sigma, \id_Z) (\kappa)} \]
Given open games $\G : X_1 \to Y_1$ and $\H : X_2 \to Y_2$ over $\C$, their simultaneous composition $\G \otimes \H : X_1 \otimes X_2 \to Y_1 \otimes Y_2$ is defined by $\Sigma_{\G \otimes \H} = \Sigma_\G \times \Sigma_\H$, $(\G \otimes \H)_{\sigma, \tau} = \G_\sigma \otimes \H_\tau$ and 
\[ |\G \otimes \H|^{\sigma, \tau}_\kappa \iff |\G|^\sigma_{\H_\tau / \kappa} \wedge |\H|^\tau_{\G_\sigma \setminus \kappa} \]

\begin{proposition}
	For any symmetric monoidal category $\C$ with a context, there is a symmetric monoidal category $\OG (\C)$ of open games over $\C$.
	When $\C = \Lens (\Set)$ with the usual context, we obtain the original category of open games.
\end{proposition}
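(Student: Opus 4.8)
The plan is to verify the symmetric-monoidal-category axioms one layer at a time, exploiting that an open game over $\C$ carries three independent pieces of data — a \emph{set} of strategy profiles, a \emph{labelling} function into $\C$, and a Boolean-valued \emph{winning condition} — and that both composition operators act on the three layers separately. On strategy-profile sets everything is governed by the cartesian structure of $\Set$ (these sets are always built by finite products), on labellings by the symmetric monoidal structure of $\C$ itself, so on those two layers every required axiom is inherited for free. Hence each verification collapses onto its winning-condition component, a purely Boolean identity between contexts obtained by transporting a given $\kappa$ along $\overline\C$ and along the operators $/$ and $\setminus$; these I would discharge by ``string diagram chasing'' in the style of the propositions above, using only functoriality of $\overline\C$, the naturality square of Definition~\ref{def:context}, and the symmetric-monoidal-functor coherences of $\overline\C$. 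As the paper does, I work throughout modulo compatible bijections of strategy-profile sets, so that ``$=$'' always tacitly includes the evident relabelling; promoting this to genuine 2-cells is routine.

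First I would check that sequential composition makes $\OG(\C)$ a category. The identity on $X$ is the zero-player open game on $\id_X$, whose unique strategy profile is winning against every context. The unit laws hold because $\overline\C(\id_X,\id_Y)$ is the identity of $\overline\C(X,Y)$ by functoriality, so the clause contributed by the identity game is vacuous and the clause contributed by the other game is left unchanged. For associativity one unfolds $|(\mathcal K\circ\H)\circ\G|^{\sigma,\tau,\upsilon}_\kappa$ and $|\mathcal K\circ(\H\circ\G)|^{\sigma,\tau,\upsilon}_\kappa$ for a third open game $\mathcal K$: each becomes a threefold conjunction of $|\G|^\sigma$, $|\H|^\tau$ and $|\mathcal K|^\upsilon$ evaluated at contexts assembled from $\kappa$ by applying $\overline\C(\id,-)$, $\overline\C(-,\id)$ and the relevant labelled morphisms in some order. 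Functoriality of $\overline\C$ in each variable, together with the commutation of $\overline\C(f,\id)$ with $\overline\C(\id,g)$ (they are the two legs of the single bifunctor $\overline\C$), shows that the two conjunctions have the same three conjuncts at the same three contexts.

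Next I would show $\otimes$ is a symmetric monoidal product. Bifunctoriality breaks into $\id_X\otimes\id_Y=\id_{X\otimes Y}$, which is immediate once one observes that the zero-player game on $\id_X\otimes\id_Y$ is the zero-player game on $\id_{X\otimes Y}$ and that feeding the trivial context to $\G_\sigma\setminus(-)$ and $\H_\tau/(-)$ is trivial, and the interchange law $(\G'\otimes\H')\circ(\G\otimes\H)=(\G'\circ\G)\otimes(\H'\circ\H)$, whose winning-condition component says precisely that transporting a context through a composite and then splitting off one factor agrees with splitting off first and then transporting — exactly what the naturality square of Definition~\ref{def:context} (and its $\setminus$-mirror) asserts, once the $\lambda_i,\mu_i,\nu_i$ there are instantiated by the labelled morphisms of $\G,\G',\H,\H'$. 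For the coherence data I would take the associator, unitors and braiding of $\OG(\C)$ to be the zero-player open games on the corresponding isomorphisms of $\C$; the pentagon, triangle and hexagon identities then reduce to the same identities in $\C$ (the winning conditions being identically true), while the \emph{naturality} of these transformations against arbitrary open games has the usual trivial strategy-profile and labelling parts together with a winning-condition part that is one more short chase, this time also using the symmetric-monoidal-functor structure of $\overline\C$. I expect this step — the interchange law and the naturality of the coherence isomorphisms — to be the only place where anything can go wrong, since it is where the order of applying $\overline\C(-,-)$, $/$ and $\setminus$ matters; but Definition~\ref{def:context} is engineered precisely to absorb that bookkeeping, so the proof is genuinely a matter of turning the crank rather than of finding a new idea.

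Finally, for $\C=\Lens(\Set)$ I would check that the abstract recipe specialises to the original one. The context functor is $\overline{\Lens}\left(\diset X S,\diset Y R\right)=X\times R^{Y}$ with $\overline{\Lens}(\lambda,\mu)$ as defined when $O$-strategies were introduced, and the abstractly-defined $/$ and $\setminus$ unwind to exactly the formulas $\H_\tau/\kappa=(h_1,k_1^{h_2,\H_\tau})$ and $\G_\sigma\setminus\kappa=(h_2,k_2^{h_1,\G_\sigma})$ recorded when simultaneous composition was first defined — indeed those formulas were read directly off the string diagrams that \emph{define} $/$ there. Therefore sequential and simultaneous composition of open games over $\Lens(\Set)$ coincide with the original operators, the objects, strategy-profile sets, labellings and winning conditions are literally unchanged, and so $\OG(\Lens(\Set))$ is the category $\OG$ of the earlier sections.
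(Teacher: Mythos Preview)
Your proposal is correct and takes essentially the same approach as the paper: the paper's own proof is simply the remark that the argument ``formally follows the proof that $\OG$ is a symmetric monoidal category'' from \cite{hedges_morphisms_open_games}, with the observation that the axioms in Definition~\ref{def:context} are exactly what is needed for that proof to go through. Your sketch is a faithful (and considerably more explicit) elaboration of precisely that deferred argument, correctly identifying the interchange law and naturality of the coherence isomorphisms as the places where the naturality square of Definition~\ref{def:context} and the symmetric-monoidal-functor structure of $\overline\C$ actually do work.
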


The proof of this proposition formally follows the proof that $\OG$ is a symmetric monoidal category.
(The clearest presentation is in \cite[section 5 \& appendix]{hedges_morphisms_open_games}.)
The definition of a context contains precisely the conditions needed for this proof to work.

Given a morphism $\lambda : X \to Y$ of $\C$, we define an open game $\lambda : X \to Y$ over $\C$ by $\Sigma_{\lambda} = \{ * \}$, $\lambda_{*} = \lambda$ and $|\lambda|^*_\kappa$ holding for all $\kappa$.

\begin{proposition}
	This defines a faithful identity-on-objects symmetric monoidal functor $\C \to \OG (\C)$.
\end{proposition}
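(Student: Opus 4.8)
The plan is to verify in turn that the assignment $\lambda \mapsto \lambda$ (sending a $\C$-morphism to the corresponding zero-player open game) is identity-on-objects, functorial, faithful, and symmetric monoidal. The single observation that makes everything routine is that the winning relation of a zero-player open game is \emph{total}: $|\lambda|^*_\kappa$ holds for every context $\kappa$. Consequently, whenever both sides of a proposed equation of open games are zero-player, the equation holds as soon as the underlying strategy-profile sets agree (up to the canonical bijection) and the labelling $\C$-morphisms agree; no condition on winning relations can ever fail. Identity-on-objects is then immediate, since objects of $\OG (\C)$ are objects of $\C$ and the open game $\lambda : X \to Y$ has source $X$ and target $Y$.

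For functoriality, the identity $\id_X$ of $\C$ is sent to the zero-player open game on $\id_X$, which is by construction the identity of $X$ in $\OG (\C)$. For composition, given $\lambda : X \to Y$ and $\mu : Y \to Z$ in $\C$, the composite $\mu \circ \lambda$ formed in $\OG (\C)$ has strategy-profile set $\{ * \} \times \{ * \}$, which we identify with $\{ * \}$ (legitimate because, as noted after the construction of $\OG$, we treat open games only up to compatible bijections of strategy profiles), labelling $\mu_* \circ \lambda_* = \mu \circ \lambda$ composed in $\C$, and winning condition
\[ |\mu \circ \lambda|^{*, *}_\kappa \iff |\lambda|^*_{\overline\C (\id_X, \mu_*) (\kappa)} \wedge |\mu|^*_{\overline\C (\lambda_*, \id_Z) (\kappa)}, \]
which holds for all $\kappa$ by totality. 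Hence $\mu \circ \lambda$ as an open game coincides with the zero-player open game on the $\C$-composite. Faithfulness follows because the labelling function is part of the data preserved by the $2$-cells we quotient by: if the open games $\lambda$ and $\mu$ are identified, the witnessing bijection $\{ * \} \to \{ * \}$ is necessarily the identity and must commute with the labellings, so $\lambda = \lambda_* = \mu_* = \mu$; thus the map $\C (X, Y) \to \OG (\C) (X, Y)$ is injective, which is all that is claimed.

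For the monoidal structure, the same totality argument shows that for $\lambda : X_1 \to Y_1$ and $\mu : X_2 \to Y_2$ the open game $\lambda \otimes \mu$ built in $\OG (\C)$ has strategy profiles $\{ * \} \times \{ * \} \cong \{ * \}$, labelling $\lambda_* \otimes \mu_* = \lambda \otimes \mu$ in $\C$, and totally satisfied winning condition, so it agrees with the zero-player open game on $\lambda \otimes \mu$; and the unit of $\OG (\C)$ is the zero-player open game on the unit of $\C$. By the construction underlying the previous proposition, the coherence isomorphisms of $\OG (\C)$ — associator, unitors and symmetry — are themselves the zero-player open games determined by the corresponding isomorphisms of $\C$, so the functor sends the symmetric monoidal structure of $\C$ to that of $\OG (\C)$ on the nose, making it strict, and the coherence axioms for the functor reduce to those already holding in $\C$. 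The only genuine care needed anywhere is the bookkeeping of the quotient by bijections of strategy-profile sets that lets us read the repeated occurrences of $\{ * \} \times \{ * \}$ above as $\{ * \}$; this is the routine-but-fiddly part, and it is exactly what is carried out in detail for $\OG$ itself in the reference cited for the previous proposition, so here it suffices to invoke it.
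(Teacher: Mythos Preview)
The paper states this proposition without proof, treating it as routine. Your proposal correctly supplies the verification the paper omits: the key observation that the winning relation of a zero-player game is total reduces every required equation of open games to the corresponding equation of underlying $\C$-morphisms, and your handling of the quotient by strategy-profile bijections and of faithfulness is sound. There is nothing to compare against; your argument is the natural one and is correct.
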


\section{Morphisms of contexts}

Given a pair of categories with contexts $\C$, $\D$, it is possible to relate open games in $\OG (\C)$ to open games in $\OG (\D)$ if we have a strict monoidal functor $\C \to \D$ that is compatible with the context functors (despite the fact that $\OG (-)$ is not functorial due to mixed variance).
In this section we will prove (mostly for completeness of the presentation) that when $\C$ is traced cartesian, the functor $-^* : \Lens (\C) \to \Int (\C)$ satisfies the required properties.
This allows us to compare open games over $\Lens (\DCPO)$ and $\Int (\DCPO)$ for example.

The reason we do not develop this idea fully is that it does not seem possible to obtain a strict (or even strong) monoidal functor $\Lens (\Set) \to \Lens (\DCPO)$ that would allow us to understand `computable game theory' as far as possible in terms of classical game theory.
Ultimately this stems from the lack of a suitable product-preserving functor $\Set \to \DCPO$.

It does seem possible to overcome this using machinery that is known.
One possibility is to consider $\DCPO$ with the smash product, which is a monoidal product that is not the categorical product, and then consider optics over this.
Open games over a particular category of optics (over the monoidal category of conditional probability distributions) were considered in the context of Bayesian games \cite{bolt_hedges_zahn_bayesian_open_games}, but they are more subtle and less intuitive so we leave this for future work.

\begin{definition}\label{def:context-morphism}
	Let $\C$ and $\D$ be symmetric monoidal categories with contexts $\overline\C$ and $\overline\D$.
	A strict morphism of contexts is a strict symmetric monoidal functor $F : \C \to \D$ together with a monoidally natural family of functions
	\[ \overline F (X, Y) : \overline\C (X, Y) \to \overline\D (F (X), F (Y)) \]
	such that
	\[ \begin{tikzcd}
		\overline\C (X_1 \otimes X_2, Y_1 \otimes Y_2) \ar[rrr, "f / -"] \ar[ddd, "{\overline F (X_1 \otimes X_2, Y_1 \otimes Y_2)}"] &&& \overline\C (X_1, Y_1) \ar[ddd, "{\overline F (X_1, Y_1)}"] \\ \\ \\
		\overline\D (F (X_1 \otimes X_2), F (Y_1 \otimes Y_2)) \ar[rrr, "F (f) / -"] &&& \overline\D (F (X_1), F (Y_1))
	\end{tikzcd} \]
	commutes for all $f : X_2 \to Y_2$.
\end{definition}

Defining non-strict morphisms of contexts takes a bit more care, but is not necessary for our purposes.

Given a traced monoidal category $\C$, the category $\Int (\C)$ is compact closed, and hence in particular traced monoidal.
We consider it to have the context defined for traced monoidal categories.
That is,
\[ \overline{\Int (\C)} \left( \diset X S, \diset Y R \right) = \Int (\C) \left( \diset Y R, \diset X S \right) = \C (Y \otimes S, X \otimes R) \]

\begin{proposition}
	Let $\C$ be a traced cartesian category.
	Then $-^* : \Lens (\C) \to \Int (\C)$ can be made into a strict morphism of contexts, by defining
	\[ \overline{\ *\ } : \C (1, X) \times \C (Y, R) \to \C (Y \times S, X \times R) \]
	to take $(h, k)$ to
	\begin{center} \begin{tikzpicture}
		\node (Y) at (0, 2) {$Y$}; \node (S) at (0, 0) {$S$}; \node (X) at (6, 2) {$X$}; \node (R) at (6, 0) {$R$};
		\node (node) [circle, scale=.5, fill=black, draw] at (2, 0) {};
		\node (h) [isosceles triangle, isosceles triangle apex angle=90, shape border rotate=180, minimum width=1.5cm, draw] at (4, 2) {$h$};
		\node (k) [rectangle, minimum height=1.5cm, minimum width=.75cm, draw] at (4, 0) {$k$};
		\draw [-] (Y) to [out=0, in=180] (k); \draw [-] (S) to (node); \draw [-] (h) to (X); \draw [-] (k) to (R);
	\end{tikzpicture} \end{center}
\end{proposition}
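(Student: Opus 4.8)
Since $\C$ is cartesian, the displayed string diagram names the $\C$-morphism
\[ \overline{\ *\ } (h, k) = \langle h \circ {!}_{Y \times S},\ k \circ \pi_Y \rangle : Y \times S \to X \times R, \]
where $\pi_Y : Y \times S \to Y$ is the projection and $!$ the unique map to $1$; read as a morphism $\diset Y R \to \diset X S$ of $\Int (\C)$ this is the element of $\overline{\Int (\C)} \left( \diset X S, \diset Y R \right)$ we assign to the $\Lens (\C)$-context $(h, k)$. The plan is to verify the two requirements of Definition \ref{def:context-morphism}: that $\overline{\ *\ }$ is a monoidally natural family of functions, and that it intertwines the context-projection operators $/$ on the two sides.

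For naturality, recall that $\overline{\Lens (\C)}$ is the composite $\mathbb V \times \mathbb K$, so a pair of $\Lens (\C)$-morphisms acts on a context $(h, k)$ by post-composing $h$ with one forward part and (contravariantly) pre-composing $k$ with the other; on the $\Int (\C)$ side $\overline{\Int (\C)} = \Int (\C) (-, -)$, so $\lambda^*$ and $\mu^*$ act by pre- and post-composition in $\Int (\C)$. I would check the naturality square by a string-diagram chase in $\C$: slide the copy- and delete-shaped components of $\lambda^*$ and $\mu^*$ (using that $-^*$ is a functor and that every $\C$-morphism is a comonoid homomorphism) past the copy/delete/$h$/$k$ template of $\overline{\ *\ }$, and conclude by the coherence theorem for traced symmetric monoidal categories \cite[theorem 5.22]{selinger11}, exactly as in the proof of the previous proposition. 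Monoidal naturality is one more chase of the same kind, using only that the twisted tensor of $\Lens (\C)$ and that of $\Int (\C)$ are given by the same formula, with the Joyal twist placed consistently in the contravariant slot, so the lax-monoidal comparison maps agree on the nose.

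The substance is the square involving $/$. Fix a $\Lens (\C)$-morphism $f = (v_f, u_f) : \diset{X_2}{S_2} \to \diset{Y_2}{R_2}$ and a context $\kappa$ for $\diset{X_1}{S_1} \otimes \diset{X_2}{S_2} \to \diset{Y_1}{R_1} \otimes \diset{Y_2}{R_2}$. Along the upper-then-right route I would compute $f / \kappa$ in $\Lens (\C)$ from its defining string diagram (equivalently, from the explicit formula $\H_\tau / \kappa = (h_1, k_1^{h_2, \H_\tau})$ with $\H_\tau := f$) and then apply $\overline{\ *\ }$. Along the left-then-bottom route I would first apply $\overline{\ *\ }$ to $\kappa$ and then form $f^* / (\overline{\ *\ } \kappa)$ using the traced-monoidal context-projection on $\Int (\C)$ supplied by the earlier proposition (context box on top, the $\Lens (\C)$-morphism beneath it, traced over the $X_2, Y_2$ wires). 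Substituting the explicit copy-shaped form of $f^* = (v_f, u_f)^*$ and pasting the two diagrams, the decisive observation is that the $S_2$-output of $f^*$ --- hence its backward part $u_f$ --- is fed into a delete coming from $\overline{\ *\ }$, so $u_f$ is annihilated by the comonoid laws; the trace then collapses to a single copy of $v_f$ applied to the $Y_2$-leg, which is exactly what the $\Lens (\C)$ formula records. Equality of the two resulting $\C$-morphisms follows once more from the coherence theorem for traced monoidal categories. I expect the main obstacle to be precisely this last manipulation --- bookkeeping the tensor twists correctly and confirming that $u_f$ genuinely gets deleted --- after which the argument is a routine string-diagram chase of the same flavour as the rest of the paper. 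The companion operator $\setminus$ needs no separate treatment, since Definition \ref{def:context-morphism} asks only for the $/$ square and $\setminus$ is derived from $/$ by the symmetry.
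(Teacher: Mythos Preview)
Your proposal is correct and tracks the paper's proof closely for the main $/$ square: both chase a context around the two routes and reduce the resulting string diagrams in $\C$, with the decisive simplification being that the $S_2$-wire is deleted, so the backward part $u_f$ of $f^*$ is annihilated by the comonoid laws (the paper phrases this as ``trace the deletion on $S_2$ backwards''). The paper also records an intermediate simplification on the upper route, using that the global element $h$ is a comonoid homomorphism to merge its two appearances into one; you do not mention this explicitly, but it is a routine step once the diagrams are written down.

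The one genuine difference is in the naturality argument. You propose a direct string-diagram chase, sliding the comonoid structure of $\lambda^*$ and $\mu^*$ through the template. The paper instead observes that $\overline{\ *\ }$ factors as
\[
\C (1, X) \times \C (Y, R) \;\cong\; \Lens (\C) \left( I, \diset X S \right) \times \Lens (\C) \left( \diset Y R, I \right) \;\overset{\circ}{\longrightarrow}\; \Lens (\C) \left( \diset Y R, \diset X S \right) \;\overset{-^*}{\longrightarrow}\; \Int (\C) \left( \diset Y R, \diset X S \right),
\]
so naturality is inherited immediately from functoriality of $-^*$. Your direct check works, but the factorisation is cleaner and avoids a second string-diagram computation; it also makes the monoidal compatibility automatic rather than a separate verification.
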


\begin{proof}
	We already checked that $-^*$ is strict symmetric monoidal.
	We get naturality for free by noting that $\overline{\ *\ }$ can be equivalently defined by
	\begin{align*}
		\C (I, X) \times \C (Y, R) &\overset\cong\longrightarrow \Lens (\C) \left( I, \diset X S \right) \times \Lens (\C) \left( \diset Y R, I \right) \\
		&\overset\circ\longrightarrow \Lens (\C) \left( \diset Y R, \diset X S \right) \\
		&\overset{-^*}\longrightarrow \Int (\C) \left( \diset Y R, \diset X S \right)
	\end{align*}
	
	Suppose we have a $P$-strategy $\lambda : \diset{X_2}{S_2} \to \diset{Y_2}{R_2}$.
	We must verify that the square
	\[ \begin{tikzcd}
		\C (1, X_1 \times X_2) \times \C (Y_1 \times Y_2, R_2 \times R_1) \ar[r, "\lambda / -"] \ar[dd] & \C (1, X_1) \times \C (Y_1, R_1) \ar[dd] \\ \\ 
		\C (Y_1 \times Y_2 \times S_2 \times S_1, X_1 \times X_2 \times R_2 \times R_1) \ar[r, "\lambda^* / -"] & \C (Y_1 \times S_1, X_1 \times R_1)
	\end{tikzcd} \]
	commutes.
	Chasing a context $(h, k)$ around the top yields
	\begin{center} \begin{tikzpicture}
		\node (Y) at (-.5, 4) {$Y_1$}; \node (S) at (-.5, 0) {$S_1$}; \node (X) at (10, 4) {$X_1$}; \node (R) at (10, 0) {$R_1$};
		\node (h1) [isosceles triangle, isosceles triangle apex angle=90, shape border rotate=180, minimum width=2cm, draw] at (7.5, 3.5) {$h$};
		\node (h2) [isosceles triangle, isosceles triangle apex angle=90, shape border rotate=180, minimum width=2cm, draw] at (2.5, .5) {$h$};
		\node (v) [rectangle, minimum height=1.5cm, minimum width=.75cm, draw] at (5.5, 0) {$v_\lambda$};
		\node (k) [rectangle, minimum height=2cm, minimum width=1cm, draw] at (7.5, .5) {$k$};
		\node (m1) [circle, scale=.5, fill=black, draw] at (9, 3) {}; \node (m3) [circle, scale=.5, fill=black, draw] at (9, 1) {};
		\node (m2) [circle, scale=.5, fill=black, draw] at (1, 0) {}; \node (m4) [circle, scale=.5, fill=black, draw] at (4, 1) {};
		\draw [-] (Y) to [out=0, in=180] (k.west |- m3); \draw [-] (h1.east |- X) to (X); \draw [-] (h1.east |- m1) to node [below] {$X_2$} (m1);
		\draw [-] (k.east |- m3) to node [above] {$R_2$} (m3); \draw [-] (k.east |- R) to (R);
		\draw [-] (S) to (m2); \draw [-] (h2.east |- m4) to node [above] {$X_1$} (m4); \draw [-] (h2.east |- v) to node [below] {$X_2$} (v);
		\draw [-] (v) to node [below] {$Y_2$} (k.west |- v);
	\end{tikzpicture} \end{center}
	As a useful intermediate point, since $h$ is a comonoid homomorphism this is equivalent to
	\begin{center} \begin{tikzpicture}
		\node (Y) at (-.5, 2.5) {$Y_1$}; \node (S) at (-.5, 0) {$S_1$}; \node (X) at (10, 2.5) {$X_1$}; \node (R) at (10, 0) {$R_1$};
		\node (h) [isosceles triangle, isosceles triangle apex angle=90, shape border rotate=180, minimum width=2cm, draw] at (2.5, .5) {$h$};
		\node (v) [rectangle, minimum height=1.5cm, minimum width=.75cm, draw] at (5.5, 0) {$v_\lambda$};
		\node (k) [rectangle, minimum height=2cm, minimum width=1cm, draw] at (7.5, .5) {$k$};
		\node (m3) [circle, scale=.5, fill=black, draw] at (9, 1) {}; \node (m2) [circle, scale=.5, fill=black, draw] at (1, 0) {};
		\draw [-] (Y) to [out=0, in=180] (k.west |- m3);
		\draw [-] (k.east |- m3) to node [above] {$R_2$} (m3); \draw [-] (k.east |- R) to (R);
		\draw [-] (S) to (m2); 
		\draw [-] (v) to node [below] {$Y_2$} (k.west |- v);
		\draw [-] (h.east |- m3) to [out=0, in=180] (X); \draw [-] (h.east |- v) to node [below] {$X_2$} (v);
	\end{tikzpicture} \end{center}
	On the other hand, chasing $(h, k)$ around the bottom yields
	\begin{center} \begin{tikzpicture}
		\node (Y1) at (0, 8) {$Y_1$}; \node (Y2) at (0, 7) {$Y_2$}; \node (S2) at (0, 6) {$S_2$}; \node (S1) at (0, 5) {$S_1$};
		\node (X22) at (0, 2) {$X_2$}; \node (R22) at (0, 0) {$R_2$};
		\node (X1) at (6, 8) {$X_1$}; \node (X2) at (6, 7) {$X_2$}; \node (R2) at (6, 6) {$R_2$}; \node (R1) at (6, 5) {$R_1$};
		\node (Y22) at (6, 2) {$Y_2$}; \node (S22) at (6, 0) {$S_2$};
		\node (h) [isosceles triangle, isosceles triangle apex angle=90, shape border rotate=180, minimum width=1.5cm, draw] at (4, 7.5) {$h$};
		\node (k) [rectangle, minimum height=1.5cm, minimum width=.75cm, draw] at (4, 5.5) {$k$};
		\node (m1) [circle, scale=.5, fill=black, draw] at (1.5, 6) {}; \node (m2) [circle, scale=.5, fill=black, draw] at (1.5, 5) {};
		\node (m3) [circle, scale=.5, fill=black, draw] at (2, 1.25) {};
		\node (v) [rectangle, minimum height=1.5cm, minimum width=.75cm, draw] at (4, 2) {$v_\lambda$};
		\node (u) [rectangle, minimum height=1.5cm, minimum width=.75cm, draw] at (4, 0) {$u_\lambda$};
		\node (dummy1) at (0, .5) {}; \node (dummy2) at (0, -.5) {};
		\draw [-] (Y1) to [out=0, in=180] (k.west |- S2); \draw [-] (Y2) to [out=0, in=180] (k.west |- S1); \draw [-] (S2) to (m1); \draw [-] (S1) to (m2);
		\draw [-] (h.east |- X1) to (X1); \draw [-] (h.east |- X2) to (X2); \draw [-] (k.east |- R2) to (R2); \draw [-] (k.east |- R1) to (R1);
		\draw [-] (X22) to [out=0, in=180] (m3); \draw [-] (m3) to [out=45, in=180] (v); \draw [-] (v) to (Y22);
		\draw [-] (m3) to [out=-45, in=180] (u.west |- dummy1); \draw [-] (R22) to [out=0, in=180] (u.west |- dummy2); \draw [-] (u) to (S22);
		\draw [-] (-3, 8) to (Y1); \draw [-] (-3, 5) to (S1); \draw [-] (X1) to (9, 8); \draw [-] (R1) to (9, 5);
		\draw [-] (Y2) to [out=180, in=90] (-1, 5.75) to [out=-90, in=180] (3, 4.5) to [out=0, in=90] (7, 2.75) to [out=-90, in=0] (Y22);
		\draw [-] (X2) to [out=0, in=90] (7, 5.5) to [out=-90, in=0] (3, 4) to [out=180, in=90] (-1, 2.5) to [out=-90, in=180] (X22);
		\draw [-] (S2) to [out=180, in=90] (-1.5, 4.75) to [out=-90, in=180] (3, 3.5) to [out=0, in=90] (7.5, 1.5) to [out=-90, in=0] (S22);
		\draw [-] (R2) to [out=0, in=90] (7.5, 4.5) to [out=-90, in=0] (3, 3) to [out=180, in=90] (-1.5, 1.5) to [out=-90, in=180] (R22);
	\end{tikzpicture} \end{center}
	To see the equivalence of this diagram to the previous one modulo traced cartesian categories, trace the deletion on $S_2$ backwards.
\end{proof}

\section{Compositional computable game theory}

We can finally put all the pieces together, by considering the category $\OG (\Int (\DCPO))$.
Concretely, for DCPOs $X, S, Y, R$, such an open game $\G : \diset X S \to \diset Y R$ consists of:
\begin{enumerate}
	\item A set $\Sigma$ of \emph{strategy profiles}
	\item A family of continuous \emph{play functions} $\mathbf P_\G (\sigma) : X \times R \to Y$
	\item A family of continuous \emph{coplay functions} $\mathbf C_\G (\sigma) : X \times R \to S$
	\item An equilibrium set $\mathbf E_\G (h, k) \subseteq \Sigma$ for each continuous \emph{history} $h : Y \times S \to X$ and \emph{continuation} $k : Y \times S \to R$
\end{enumerate}
This definition is written in the style of the original concrete definition of open games in \cite{hedges_etal_compositional_game_theory} for ease of comparison.
Specifically, besides changing the base category from $\Set$ to $\DCPO$ this definition differs by making $\mathbf C_\G$ additionally a function of $R$, $h$ a function of $Y$ and $S$, and $k$ a function of $S$.

The category $\OG (\Int (\DCPO))$ is compact closed, as a result of applying proposition \ref{prop:ccc-transport} to the zero-player functor $\Int (\DCPO) \to \OG (\Int (\DCPO))$.

As an exercise, we work out the transpose $\mathcal G^* : \diset R Y \to \diset S X$ of a general open game $\mathcal G : \diset X S \to \diset Y R$ over $\Int (\C)$.
The set of strategy profiles stays the same up to isomorphism, $\Sigma_{\G^*} \cong \Sigma_\G$, because the transpose is defined by composition with various open games whose set of strategy profiles is $1$.
The play function is modified by taking the transpose in $\Int (\C)$, which in the end simply exchanges the play and coplay functions $X \times R \to Y$, $X \times R \to S$ and swaps their inputs.
A context $(h, k)$, for $h : S \times Y \to R$ and $k : S \times Y \to X$ is again swapped to give a context $h' = k : Y \times S \to X$ and $k' = h : Y \times S \to R$ for $\mathcal G$, so equilibrium is defined by $|\mathcal G^*|^\sigma_{h, k} \iff |\mathcal G|^\sigma_{k, h}$.

Given a continuous function $f : X \to Y$, the covariant and contravariant liftings $f : \diset X 1 \to \diset Y 1$ and $f^* : \diset 1 Y \to \diset 1 X$ are now transposes of each other.
Thus we are conservatively extending the notion of duality that already exists in categories of open games.

Recall that for a decision $\D_{Y | X} : \diset X 1 \to \diset{Y}{\mathbb R}$ over $\Lens (\Set)$, a context is a pair $h : X$ and $k : Y \to \mathbb R$, and the equilibrium condition for a strategy $\sigma : X \to Y$ is that $\sigma (h) \in \arg\max (k)$.
We can make a similar definition over $\Lens (\DCPO)$ given a suitable domain of reals $\mathbb R$ and a suitable $\arg\max$ operator defined on continuous functions $Y \to \mathbb R$.
There are several options for defining these, and we remain largely agnostic between them.
(We do not assume that $\arg\max$ is internalised in $\DCPO$ as a function $\mathbb R^Y \to \mathcal P (Y)$ for some powerdomain $\mathcal P$, since a naive definition would not be continuous.)

%We can make a similar definition over $\Lens (\DCPO)$ given a suitable operator $\arg\max : \mathbb R^Y \to \mathcal P (Y)$ in $\DCPO$, where $\mathbb R$ is a suitable domain of reals and $\mathcal P$ is a suitable powerdomain operator.
%There are many options for defining these, and we remain largely agnostic between them.

A simple example of a domain of reals that can serve as a mental model is the domain of closed intervals $[x, y]$ with the reverse inclusion order, together with $\bot_\mathbb R = (-\infty, +\infty)$.
Here $[x, y]$ represents an approximation of some $z \in [x, y]$, and a standard real number $z$ is represented by the degenerate interval $[z, z]$.
Note that the $\arg\max$ operator is still defined for the standard order on reals (which must be extended to all elements of the domain), which is not related to the inclusion order.
As a minimal requirement in order to work out an example later, we suppose that $\bot_\mathbb R$ is below every standard real in the extended standard order.
This corresponds to the assumption that players always prefer a terminating payoff, no matter how small, to a nonterminating one.

%However, for purposes of making a concrete example we assume that $\arg\max$ is total in the sense that if $k (x) \neq \bot_\mathbb R$ for some $x \in X$ then $\arg\max (k) \neq \bot_X$.
%[this is wrong]

Over $\Int (\DCPO)$, the context for a decision $\D_{X, Y}$ has the form
\[ \kappa = (h, k) : \overline{\Int (\DCPO)} \left( \diset X 1, \diset{Y}{\mathbb R} \right) \cong \DCPO (Y, X) \times \DCPO (Y, \mathbb R) \]
Notice that since the coutility type $S = 1$ is the terminal DCPO, the only difference from a context over $\Lens (\C)$ is that the history $h$ may depend on the move from $Y$.
That is, the future action may affect the past observation.

We define the decision $\D = \D_{Y | X} : \diset X 1 \to \diset{Y}{\mathbb R}$ over $\Int (\DCPO)$ to have $\Sigma_\D = \DCPO (X, Y)$, and the play function
\[ \DCPO (X, Y) \to \Int (\DCPO) \left( \diset X 1, \diset{Y}{\mathbb R} \right) \cong \DCPO (X \times \mathbb R, Y) \]
given by composition with the projection.
A natural definition for equilibrium is that the least fixpoint $y$ of $y = \sigma (h (y))$ is in $\arg\max (k)$, which we write
\[ \left| \D \right|^\sigma_{h, k} \iff \mu y . \sigma (h (y)) \in \arg\max (k) \]

%\begin{proposition}
%	$\D'$ is conservative over $\D : \diset X 1 \to \diset Y \R$.
%\end{proposition}
%
%\begin{proof}
%	Let $\sigma : \hom_\DCPO (X, Y)$ be a strategy and $(h, k)$ a context for $\D$, where $h : \hom_\DCPO (1, X)$ and $k : \hom_\DCPO (Y, \R)$.
%	This is sent by $\cohom_* \left( \diset X 1, \diset Y R \right)$ to the context $(h', k)$ for $\D'$, where $h' : Y \to X$ is given by $h' (y) = h$.
%	The least (in fact unique) fixpoint of $y = \sigma (h' (y))$ is $y = \sigma (h)$, and so $|\D'|^\sigma_{h', k}$ iff $\sigma (h) \in \arg\max (k)$.
%	This is precisely the definition of $|\D|^\sigma_{h, k}$, as required.
%\end{proof}

%As a genuinely new concrete example, we work out the transpose $\D^*_{X, Y} : \diset{\mathbb R}{Y} \to \diset 1 X$ of a decision $\D_{X, Y}$.
%The set of strategy profiles is still $\Sigma_{\D^*_{X, Y}} = \hom_\DCPO (X, Y)$.
%Given a strategy profile $\sigma : X \to Y$, the play function $\mathbf P_{\D^*_{X, Y}} (\sigma) : \mathbb R \times X \to 1$ is given by $\mathbf P_{\D^*_{X, Y}} (\sigma) (u, x) = \mathbf C_{\D_{X, Y}} (\sigma) (x, u) = *$, and the coplay function $\mathbf C_{\D^*_{X, Y}} (\sigma) : \mathbb R \times X \to Y$ by $\mathbf C_{\D^*_{X, Y}} (\sigma) (u, x) = \mathbf P_{\D_{X, Y}} (\sigma) (x, u) = \sigma (x)$.
%Given a context $h : Y \to \mathbb R$ and $k : Y \to X$, the equilibrium condition $|\D^*_{X, Y}|^\sigma_{h, k}$ is defined by $|\D_{X, Y}|^\sigma_{k, h}$, i.e. $\mu y . \sigma (k (y)) \in \arg\max (h)$.
%Thus we see that this is ultimately made possible by the dependence of $h$ on $Y$.

As a worked example, we can build a 2-player game in which each player's strategy may be contingent on the choice of the other, something that is causally absurd.
Let the game $\mathcal G : I \to I$ be defined by the string diagram
\begin{center} \begin{tikzpicture}[auto]
	\node (DYX) [rectangle, minimum height=2cm, draw] at (0, 3) {$\mathcal D_{X | Y}$};
	\node (DXY) [rectangle, minimum height=2cm, draw] at (0, 0) {$\mathcal D_{Y | X}$};
	\node (copy1) [circle, scale=.5, fill=black, draw] at (3, 3.5) {}; \node (copy2) [circle, scale=.5, fill=black, draw] at (3, .5) {};
	\node (dummy1) at (0, 2.5) {}; \node (dummy2) at (0, 1.5) {}; \node (dummy3) at (0, 1) {}; \node (dummy4) at (0, -.5) {};
	\node [trapezium, trapezium left angle=0, trapezium right angle=75, shape border rotate=90, trapezium stretches=true, minimum height=1cm, minimum width=2cm, draw] (U) at (6, 2) {$U$};
	\draw [->-] (DYX.east |- copy1) to node [near start] {$X$} (copy1); \draw [->-] (DXY.east |- copy2) to node [near start] {$Y$} (copy2);
	\draw [->-] (copy1) to [out=45, in=-90] (3.5, 4.05) to [out=90, in=0] (3, 4.5) to (0, 4.5) to [out=180, in=90] (-2.5, 1.75) to [out=-90, in=180] node [below, near end] {$X$} (DXY.west);
	\draw [->-] (copy2) to [out=45, in=-90] (3.5, 1.05) to [out=90, in=0] (3, 1.5) to (0, 1.5) to [out=180, in=-90] (-1.5, 2.25) to [out=90, in=180] node [above, near end] {$Y$} (DYX.west);
	\draw [->-] (copy1) to [out=-45, in=180] node [near end] {$X$} (U.west |- dummy1);
	\draw [->-] (copy2) to [out=-45, in=180] node [above, very near end] {$Y$} (U.west |- dummy2);
	\draw [->-] (U.east |- dummy1) to [out=0, in=90] node [above, very near start] {$\R$} (8, 1) to [out=-90, in=0] (6, -.5) to [out=180, in=0] node [below, very near end] {$\R$} (DXY.east |- dummy4);
	\draw [->-] (U.east |- dummy2) to [out=0, in=90] node [above, near start] {$\R$} (7, 1) to [out=-90, in=0] (6, .5) to [out=180, in=0] (2, 2.5) to [out=180, in=0] node [below, near end] {$\R$} (DYX.east |- dummy1);
\end{tikzpicture} \end{center}
We suppose $X$ and $Y$ to be finite flat domains, say $X = Y = \{ \bot, a, b \}$.
We also suppose that $U$ is zero-player and encodes some function $U : X \times Y \to \mathbb R \times \mathbb R$, where $\mathbb R$ in the latter is the set of standard real numbers.
That is to say, every $U (x, y)$ is some pair of total real numbers for $x, y \neq \bot$.
As a specific example, let $U$ be the payoff matrix of matching pennies, extended as follows:
\[ U (x, y) = \begin{cases}
	(1, 0) &\text{ if } x = y \neq \bot \\
	(0, 1) &\text{ if } x \neq y, x \neq \bot, y \neq \bot \\
	(\bot_\mathbb R, \bot_\mathbb R) &\text{ if } x = \bot \text{ or } y = \bot
\end{cases} \]
Matching pennies is an interesting example here because it exhibits second-move advantage: either player would benefit from the ability to play contingently on the other's move.

The set of strategy profiles of this game is $\Sigma = \DCPO (X, Y) \times \DCPO (Y, X)$.
The equilibrium condition $|\G|^{\sigma, \tau}$ holds iff
\[ \mu x . \sigma (\tau (x)) \in \arg\max_x U_1 (x, \tau (x)) \]
\[ \mu y . \tau (\sigma (y)) \in \arg\max_y U_2 (\sigma (y), y) \]

We can now check these conditions on some specific examples.
First suppose that $\sigma$ and $\tau$ are both constant functions, say $\sigma (y) = a$ and $\tau (x) = b$, including for $y = \bot$ and $x = \bot$.
(These are `lazy' functions: they terminate with a total value even when their input does not.)
We can then directly calculate that $\mu x . \sigma (\tau (x)) = a$ and $\mu y . \tau (\sigma (y)) = b$.
The first player has incentive to deviate because $a \not\in \arg\max_x U_1 (x, b) = \{ b \}$, although the second player is satisfied since $b \in \arg\max_y U_2 (a, y) = \{ b \}$.
Thus $(\sigma, \tau)$ is not an equilibrium of this game.
By this reasoning $\mathcal G$ has no `lazy' equilibria of this form, since matching pennies has no pure strategy Nash equilibria.

Next consider the strategies $\sigma (y) = y$ and $\tau (x) = a$, in which player 1 seizes the second-move advantage by playing the optimal response to player 2's move, namely copying it.
Then $\mu x . \sigma (\tau (x)) = \mu y . \tau (\sigma (y)) = a$, and $(\sigma, \tau)$ is an equilibrium since $\arg\max_x U_1 (x, a) = \{ a \}$ and $\arg\max_y U_2 (y, y) = \{ a, b \}$.
There is another equilibrium given by $\sigma (y) = y$ and $\tau (x) = b$.
Notice that player 2 could play $\bot$ and deadlock the play, but we assume that she prefers the `losing' total payoff of $0$.
Similarly there are two more equilibria in which it is player 2 who takes the second-move advantage with $\tau (x) = \overline x$, given by $\overline a = b$ and $\overline b = a$ (and in consequence, $\overline \bot = \bot$).

Finally, suppose that both players attempt to move second, with the strategy profile $\sigma (y) = y$ and $\tau (x) = \overline x$.
Then $\mu x . \sigma (\tau (x)) = \mu y . \tau (\sigma (y)) = \bot$: the play deadlocks as each player waits for the other to move first.
However $\arg\max_x U_1 (x, \overline x) = \arg\max_y U_2 (y, y) = \{ a, b \}$, so both players have incentive to deviate and $(\sigma, \tau)$ is not an equilibrium.
Given our assumptions, either player would prefer to move first and take the losing total payoff, rather than deadlocking the play.

%We can obtain different results with a different assumption.
%For example, for $\mathbb R$ the domain of closed intervals, suppose the payoff matrix $U$ is extended instead by
%\[ U (x, y) = \begin{cases}
%	([1, 1], [0, 0]) &\text{ if } x = y \neq \bot \\
%	([0, 0], [1, 1]) &\text{ if } x \neq y, x \neq \bot, y \neq \bot \\
%	([0, 1], [0, 1]) &\text{ if } x = \bot \text{ or } y = \bot
%\end{cases} \]
%This encodes that the players know that their payoff will be in the interval $[0, 1]$ even before any information about the play is known.
%Suppose further that 

\bibliographystyle{alpha}
\bibliography{\string~/Dropbox/Work/refs}

\end{document}